\documentclass{llncs}
\usepackage{llncsdoc}
\usepackage{algorithm}
\usepackage[noend]{algpseudocode}
\usepackage{amsmath} 
\usepackage{graphicx}
\usepackage{amssymb}  

\usepackage{thmtools,thm-restate}

\author{Pratik Ghosal and Katarzyna Paluch}

\date{}

\institute{University of Wroclaw}
\title{Manipulation  Strategies for the Rank-Maximal Matching Problem\protect\footnote{Partly supported by Polish National Science Center grant UMO-2013/11/B/ST6/01748}}
\begin{document}
\maketitle
\thispagestyle{empty}

\begin{abstract}
We consider manipulation strategies for the rank-maximal matching problem. Let $G = (A \cup P, \mathcal{E})$ be a bipartite graph  such that $A$ denotes a set of applicants and $P$ a set of  posts. Each applicant $a \in A$ has a preference list over the set of his neighbours in $G$, possibly involving ties.  
A matching $M$ is any subset of edges from $\mathcal{E}$ such that no two edges of $M$ share an endpoint.  A {\em rank-maximal} matching is one in which the maximum number of applicants is matched to their rank one posts, subject to this condition, the maximum number of applicants is matched to their rank two posts and so on. 
A central authority matches  applicants to  posts in $G$ using one of  rank-maximal matchings.   Let $a_1$ be the sole manipulative applicant, who knows the preference lists of all the other applicants and wants to falsify his preference list, so that, he has a chance of getting better posts than if he were truthful, i.e., than if he gave a true preference list. 

We give three manipulation strategies for $a_1$ in this paper. In the first problem `best nonfirst', the manipulative applicant $a_1$ wants to ensure that he is never matched to any post worse than the most preferred post among those of rank greater than one  and obtainable, when he is truthful. In the second strategy `min max' the manipulator wants to construct  a preference list for $a_1$ such that the worst post he can become matched to by the central authority is best possible or in other words, $a_1$ wants to minimize the maximal rank of a post he can become matched to. 
To be able to carry out strategy `best nonfirst', $a_1$ only needs to know  the most preferred post of each applicant,  whereas putting into effect `min max' requires the knowledge of whole preference lists of all applicants. The last manipulation strategy `improve best' guarantees that $a_1$ is matched  to his most preferred post at least in some rank-maximal matchings.

\end{abstract}
\section{Introduction} \label{introduction}
We consider manipulation strategies for the rank-maximal matching problem. In the  rank-maximal matching problem,  we are given a bipartite graph $G = (A \cup P, \mathcal{E})$ where $A$ denotes a set of applicants and $P$ a set of  posts. Each applicant $a \in A$ has a preference list over the set of his neighbours in $G$, possibly involving ties. Preference lists are represented by ranks on the edges - an edge $(a,p)$ has rank $i$, denoted as $rank(a,p)=i$, if post $p$ belongs to one of $a$'s $i$-th choices. An applicant $a$ prefers a post $p$ to a post $p'$ if $rank(a,p)<rank(a,p')$. In this case, we say that $(a,p)$ has higher rank than $(a,p')$. If $a$ is indifferent between $p$ and $p'$, then $rank(a,p)=rank(a,p')$. Posts most preferred by an applicant $a$
have rank one in his preference list. A {\em matching} $M$ is any subset of edges $\mathcal{E}$ such that no two edges of $M$ share an endpoint.  A matching is called a {\em rank-maximal} matching if it matches the maximum number of applicants to their rank one posts and subject to this condition, the maximum number of applicants  to their rank two posts, and so on. A rank-maximal matching can be computed in $O(\min(c \sqrt{n},n) m)$ time, where $n$ is the number of applicants, $m$ the number of edges and $c$ the maximum rank of an edge in an optimal solution \cite{IrvingKMMP06}. 

A central authority matches applicants to  posts by using the rank-maximal matching algorithm.  Since there may be more than one rank- maximal matching of $G$, we assume that  the central authority may choose any one of them arbitrarily. Let $a_1$ be a manipulative applicant, who knows the preference lists of all the other applicants and wants to falsify his preference list, so that, he has a chance of getting better posts than if he were truthful, i.e., than if he gave a true preference list. We can always assume that $a_1$ does not get his most preferred post in every rank-maximal matching when he is truthful, otherwise, $a_1$ does not have any incentive to cheat. Also, we can notice that it is usually advantageous for $a_1$ to truncate his preference list. Let $H_p$ denote the graph, in which $a_1$'s
preference list consists of only one post $p$. Then as long as no rank-maximal matching of $H_p$ leaves $a_1$ unmatched, he is guaranteed  to
always get the post $p$. To cover the worst case situation for $a_1$, our strategies require $a_1$ to provide a full preference list that includes every post from $P$. 
Also, $a_1$ could make  the posts, he does not want to be matched to, appear very  far in his preference list.  Thus, we  assume that $a_1$ does not have any gap in his  preference list, i.e., it cannot happen that in $a_1$'s preference list there are a rank $i$ and rank $(i+2)$ posts but none of rank $(i+1)$.

\textbf{Our Contribution:} Our  contribution consists in  developing  manipulation strategies for the rank-maximal matching problem. Given a graph instance with the true preference list of every applicant, we introduce three manipulation strategies for $a_1$. We consider the case where $a_1$ is the sole manipulator in $G$.

 Our first manipulation strategy named `best nonfirst' is described in Section \ref{true}. The strategy may not provide an optimal improvement for $a_1$, but it is  simple and fast. This strategy guarantees that $a_1$ is never matched to any post worse than the second best  post he can be matched to in a rank-maximal matching, when he is truthful. In other words, if $a_1$ is matched to a post $p$ when he is truthful and $p$ is not his most preferred post, then the strategy `best nonfirst' ensures that he is never matched to any post ranked worse than $p$ in any rank-maximal matching. The advantage of this strategy is that $a_1$ does not need to know  full preference lists of the other applicants. He only needs to know  the most preferred post of each applicant to be able to successfully execute the strategy.

Next, in  Section \ref{minmax} we propose the strategy `min max'. The strategy minimizes the maximal rank of a post $a_1$ can become matched to. Thus it optimally improves the worst post of $a_1$ that is obtainable from the central authority. What is more, the strategy has the property that by using it, $a_1$ always gets matched to $p_1$, which is the best among  worst posts he can be matched to. Moreover, we prove that there does not exist a strategy that simultaneously guarantees that $a_1$ never gets a post worse than $p_1$ and sometimes gets a post better than $p_1$.

Last but not least, we have studied the manipulation strategy `improve best' in Section \ref{improvebest}. The previous two manipulation strategies improve the worst post $a_1$ can be matched to in a rank-maximal matching. Hence, these strategies may not match $a_1$ to his most preferred post in any rank maximal matching. In this manipulation strategy, $a_1$ has a different goal - he wants to be matched to his most preferred post in some rank-maximal matchings. Note that it is not possible for him to ensure that he always gets his most preferred post. 


{\bf Previous and related work.}
The rank-maximal matching problem belongs to the class of matching problems with preferences. In the problems with one-sided preferences, the considered graph
is bipartite and each vertex of only  one set of the bipartition expresses  preferences over the set of its neighbours. Apart from rank-maximal matchings, other types of 
matchings from this class include  pareto-optimal \cite{abdulkadirouglu1998random} \cite{roth1977weak} \cite{AzizBH13}, popular \cite{AbrahamIKM07} and fair\cite{huang2016fair} matchings among others. In the problems with two-sided preferences, the underlying graph is also bipartite but  vertices  from both  sides of the bipartition express preferences  over their  neighbours.  The most famous example of a matching problem with two-sided preferences is that of a  stable matching known also as the stable marriage problem. Since the seminal paper by Gale and Shapley \cite{1962GaleShapley}, it has been studied very intensively, among others in \cite{1989GusfieldIrving},\cite{1998Irving},\cite{1984Roth}. In  the non-bipartite matching  problems with preferences each vertex from the graph ranks all of its neighbours.  The stable roommate problem \cite{irving1985efficient} is a counterpart of the stable marriage problem in the non-bipartite setting.

The rank-maximal matching problem was first introduced by Irving\cite{Irvgreedy}. A rank-maximal matching can be found via a relatively straightforward reduction to the maximum weight matching problem.   The already mentioned \cite{IrvingKMMP06} gives
a combinatorial algorithm that runs in $O(\min(n, c \sqrt n)m)$ time. The capacitated and weighted versions were considered, respectively,
in \cite{paluch2013capacitated} and \cite{kavitha2006efficient}. A switching graph characterization of the set of all rank-maximal matchings is described in \cite{ghosal2014rank}. Finally, the dynamic version of the rank-maximal matching problem was considered in \cite{NimbhorkarV17} and \cite{GhosalKP17}. 

  A matching problem with preferences is called strategy-proof if it is in the best interest of each applicant to provide their true preference list.  An example of a strategy-proof mechanism among  matching problems with one-sided preferences is that of a pareto optimal matching. The strategyproofness of a pareto optimal matching has applications  in house allocation  \cite{hylland1979efficient} \cite{AbrahamCMM05}  \cite{shapley1974cores} \cite{KrystaMRZ14} and kidney exchange  \cite{RothSU05} \cite{AshlagiFKP15}. Regarding the stable matching problem, if a stable matching algorithm produces a  men-optimal stable matching, then it is not possible for men to gain any advantage  by changing or contracting their preference lists and then the best strategy for them is to keep their true preference lists \cite{dubins1981machiavelli}\cite{roth1982economics}.

In the context of matching with  preferences  cheating strategies were mainly studied for the stable matching problem.  Gale and Sotomayor \cite{gale1985ms} showed that women can shorten their preference lists to force an algorithm, that computes the men-optimal stable matching, to produce the  women-optimal stable matching. Teo et al. \cite{teo2001gale} considered a cheating strategy, where  women are required to give a full preference list and one of the women is a manipulator.  Huang \cite{huang2006cheating} explored the versions, in which, men can make coalitions. Manipulation strategy in the  stable roommate problem was also considered by Huang\cite{huang2007cheating}. For a matching problem with one-sided preferences, Nasre\cite{nasre2013popular} studied manipulation strategies for the popular matching problem.

\section{Background} \label{background}
  
A matching $M$ is said to be {\em maximum (in a graph $G$)} if, among all matchings of $G$, it has the maximum number of edges.
A path $P$ is said to be {\em alternating with respect to matching $M$} or {\em $M$-alternating} if its edges  belong alternately to $M$ and $\mathcal{E} \setminus M$.  A vertex $v$ is {\em unmatched} or {\em free} in $M$ if it is not incident to any edge of $M$. An $M$-alternating path $P$ such that both its endpoints are unmatched in $M$, is said to be {\em $M$-augmenting} (or augmenting with respect to $M$). It was proved by Berge \cite{berge1957two} that a matching $M$ is maximum if and only if there exists no $M$-augmenting path.

We  state the following well-known properties of maximum matchings in bipartite graphs. Let $G = (A \cup P,\mathcal{E})$ be a bipartite graph and let $M$ be a maximum matching in $G$. The matching $M$ defines a partition of the vertex set $A \cup P$ into three disjoint sets. 
A vertex $v \in A \cup P$ is even (resp. odd) if there is an even (resp. odd) length alternating path with respect to $M$ from an unmatched vertex to $v$. A vertex $v$ is unreachable if there is no alternating path from an unmatched vertex to $v$. The even, odd and unreachable vertices are denoted by $E$, $O$ and $U$ respectively. The following lemma is well known in matching theory. The proofs can be found in \cite{IrvingKMMP06}.

\begin{lemma}
Let $E$, $O$ and $U$ be the sets of vertices defined as above by a maximum matching $M$ in $G$. Then,
\begin{enumerate}
\item $E$, $O$ and $U$ are pairwise disjoint, and independent of the maximum matching $M$ in $G$.
\item In any maximum matching of $G$, every vertex in $O$ is matched with a vertex in $E$, and every vertex in $U$ is matched with another vertex in $U$. The size
of a maximum matching is $|O| + |U|/2$.
\item $G$ contains no edge between a vertex in $E$ and a vertex
in $E \cup U$.
\end{enumerate} 
\end{lemma}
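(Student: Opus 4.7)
The plan is to deduce everything from Berge's characterization of maximum matchings (no $M$-augmenting path exists) together with the parity constraint that bipartiteness imposes on alternating walks.

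For the disjointness claim in Part 1, I would show $E \cap O = \emptyset$ by contradiction: if $v$ lay in both sets, I could concatenate an even alternating $M$-path from some unmatched $u_e$ to $v$ with the reversal of an odd alternating $M$-path from some unmatched $u_o$ to $v$, obtaining an alternating walk of odd length joining two unmatched vertices. A standard cleanup---truncating at the first repeated vertex and using the bipartite structure to rule out odd cycles---extracts an honest $M$-augmenting path, contradicting maximality. Disjointness from $U$ is built into the definition. For independence of the labels from the choice of $M$, I would take any other maximum matching $M'$, consider the symmetric difference $M \triangle M'$, and use $|M|=|M'|$ to conclude that every path component of $M\triangle M'$ has both endpoints unmatched in both matchings; this lets me transform any alternating path witnessing a label for $M$ into one for $M'$ of the same parity.

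For Part 2, every unmatched vertex lies in $E$ via the length-$0$ alternating path, so $O$ and $U$ consist entirely of matched vertices. If $v \in O$ with matching partner $w$, the odd alternating path ending at $v$ must finish with a non-matching edge, so extending it by the matching edge $(v,w)$ yields an even alternating path to $w$, placing $w \in E$. Conversely, were $v \in U$ matched to some $w \in E$, extending the even path to $w$ by $(v,w)$ would put $v$ in $O$, contradicting $v\in U$; hence the partner of a $U$-vertex must itself be in $U$. Counting matching edges as either $O$-$E$ or $U$-$U$ pairs then yields $|M|=|O|+|U|/2$.

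Part 3 is where I expect the main technical obstacle. The delicate subcase is an edge $(v,w)$ with $v \in E$ and $w \in U$: I would take an even alternating path $P$ from an unmatched vertex to $v$ and attempt to extend by $(v,w)$. If $P$ is nonempty, its final edge is matching, so $(v,w)$ must be non-matching, and extension produces an odd alternating walk to $w$, witnessing $w \in O$ and contradicting $w \in U$; if $P$ is empty then $v$ itself is unmatched and the single edge $(v,w)$ already witnesses $w \in O$. The genuine subtlety is that $w$ might already appear on $P$, which I would handle by truncating $P$ at its first occurrence of $w$ and invoking bipartite parity to verify the resulting label. The $E$-$E$ subcase is analogous but produces an $M$-augmenting path by splicing even alternating paths from two distinct unmatched roots through the edge $(v,w)$, again contradicting maximality of $M$.
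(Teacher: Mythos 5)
The paper itself gives no proof of this lemma --- it is quoted as standard, with a pointer to Irving et al.\ \cite{IrvingKMMP06} --- so your argument has to stand on its own; most of it is indeed the standard alternating-path proof and is fine as a sketch (the extraction of an augmenting path from an odd alternating walk in a bipartite graph, the $E$--$U$ and $E$--$E$ cases of Part 3, the edge count in Part 2).

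The genuine gap is the ``independent of $M$'' step of Part 1. Your claim that every path component of $M \triangle M'$ has both endpoints unmatched in both matchings cannot be right: an endpoint of a path component is incident to an edge of $M \triangle M' \subseteq M \cup M'$, so it is matched in at least one of the two matchings. What $|M|=|M'|$ actually yields is that every path component has even length, one endpoint being exposed by $M$ and the other by $M'$ (already visible for $M=\{ab\}$, $M'=\{bc\}$ on the path $a$--$b$--$c$). More importantly, the announced step ``transform any alternating path witnessing a label for $M$ into one for $M'$ of the same parity'' is precisely the content of the invariance claim, and inspecting $M \triangle M'$ componentwise does not by itself produce such a transformation; this is where the real work lies. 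A standard repair: prove that $v \in E$ if and only if some maximum matching leaves $v$ exposed (for one direction switch $M$ along the even path ending at $v$; for the other, take a maximum matching $N$ missing $v$ and look at the component of $v$ in $M \triangle N$, which is an even $M$-alternating path from an $M$-exposed vertex to $v$). This characterizes $E$ without reference to $M$; then $O = N(E)\setminus E$ and $U$ is the remainder, so all three classes are invariant. Separately, a small slip in Part 2: when $w \in E$ is matched to $v$, a nonempty even alternating path to $w$ already ends with the matching edge $(v,w)$, so you cannot ``extend'' it by $(v,w)$ (that would either repeat the edge or break alternation); the correct observation is that $v$ already lies on that path at odd distance, which gives $v \in O$ and the desired contradiction directly.
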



\subsection{Rank-Maximal Matchings}
Next we review an  algorithm by Irving et al. \cite{IrvingKMMP06} for computing  a rank-maximal matching. Let $G = (A \cup P, \mathcal{E})$
be an instance of the rank-maximal matching problem. Every edge $e=(a,p)$ has a rank reflecting its position in the preference list of applicant $a$.  $\mathcal{E}$ is the union of disjoint sets $\mathcal{E}_i$ , i.e.,  $\mathcal{E} = \mathcal{E}_1 \cup \mathcal{E}_2 \cup \mathcal{E}_3 ... \cup \mathcal{E}_r$, where $\mathcal{E}_i$ denotes the set of edges of rank $i$ and $r$ denotes the lowest rank of an edge in $G$.

\begin{definition}
The signature of a matching $M$ is defined as an $r$-tuple $\rho(M) = (x_1,..., x_r)$ where, for each $1 \leq i \leq r$, $x_i$ is the number of
applicants who are matched to their $i$-th rank post in $M$.
\end{definition}

Let $M$ and $M'$ be two matchings of $G$, with the signatures $sig(M) = (x_1,..., x_r)$ and
$sig(M') = (y_1,..., y_r)$. We say $M \succ M'$ if there exists $k$ such that $x_i = y_i$ for each $1 \leq i < k \leq r$ and $x_k > y_k$. 

\begin{definition}
A matching $M$ of a graph $G$ is called  rank-maximal  if and only if $M$ has the best signature under the ordering $\succ$ defined above.
\end{definition}

 We give a brief description of the algorithm of Irving et al. \cite{IrvingKMMP06} for computing a rank-maximal matching, whose pseudocode (Algorithm \ref{alg1}) is given below.  Let us denote $G_i = (A \cup P, \mathcal{E}_1 \cup \mathcal{E}_2 \cup ...\cup \mathcal{E}_i)$ as a subgraph of $G$ that only contains edges of rank smaller or equal to $i$. The algorithm runs in phases. The algorithm starts with $G'_1 = G_1$ and a maximum matching $M_1$ of $G_1$.  In the first phase, the set of vertices is partitioned into $E_1$, $O_1$ and $U_1$. The edges between $O_1$ and $O_1 \cup U_1$ are deleted. Since the vertices incident to $O_1 \cup U_1$ have to be matched in $G_1$ in every rank-maximal matching, the edges of rank greater than $1$ incident to such vertices are deleted from the graph $G$.  Next we add the edges of rank $2$ and call the resulting graph $G'_2$. The graph $G'_2$ may contain some $M_1$-augmenting paths. We determine the maximum matching $M_2$ in $G'_2$  by augmenting $M_1$.
In the $i$-th phase,  the vertices are partitioned into three disjoint sets $E(G'_i)$, $O(G'_i)$ and $U(G'_i)$. We delete every edge between $O_i$ and $O_i \cup U_i$. Also, we delete every edge of rank greater than $i$ incident to  vertices in $O_i\cup U_i$. Next we add the edges of rank $(i+1)$ and call the resulting graph $G'_{i+1}$. We determine the maximum matching $M_{i+1}$ in $G'_{i+1}$ by  augmenting $M_i$. $G'$ is also called the reduced graph of $G$.

\begin{algorithm}[h]
\caption{for computing a rank-maximal matching}
\label{alg1}
\begin{algorithmic}[1]

\Procedure {RankMaximalMatching}{G}
\State $G'_1 \gets G_1$
\State Let $M_1$ be any maximum matching of $G'_1$
\For {$i = 1, 2, \ldots, r$}
	\State Partition the vertices of $G'_i$ into the sets $E(G'_i)$, $O(G'_i)$ and $U(G'_i)$ 
	\State Delete all edges in $\mathcal{E}_j$ (for $j > i$) which are incident on vertices in $O(G'_i) \cup U(G'_i)$
    \State Delete all $O(G'_i)O(G'_i)$ and $O(G'_i)U(G'_i)$ edges from $G'_i$. 
	\State Add the edges in $\mathcal{E}_{i+1}$ and denote the graph as $G'_{i+1}$. 
	\State Determine a maximum matching $M_{i+1}$ in $G'_{i+1}$ by augmenting $M_i$.
\EndFor
\State \Return $M_{r}$
\EndProcedure
\end{algorithmic}
\end{algorithm}

The following invariants are proved in  \cite{IrvingKMMP06}. 
\begin{enumerate}
\item For every $1 \leq i \leq r$, every rank-maximal matching in $G_i$ is contained in $G'_i$.
\item The matching $M_i$ is rank-maximal in $G_i$, and is a maximum matching in $G'_i$.
\item If a rank-maximal matching in $G$ has signature $(s_1,..., s_i,... s_r)$, then $M_i$
has signature $(s_1,..., s_i)$.
\item The graphs $G'_i$, ($1 \leq i \leq r$) are independent of the rank-maximal matching computed by the algorithm. 
\end{enumerate}

\begin{restatable}{lemma}{lemmatwo} \label{F}
Let $G=(A \cup P, \mathcal{E})$ and $G'=(A \cup P, \mathcal{E'})$ be two bipartite graphs with ranks on the edges. Suppose  that $\mathcal{E'} \subseteq \mathcal{E}$. Also, every edge $e \in \mathcal{E'}$ has the same rank in $G$ and $G'$. Then any rank-maximal matching $M$ of $G$ such that $M \subseteq \mathcal{E'}$ is also a rank-maximal
matching of $G'$.
  
  \end{restatable}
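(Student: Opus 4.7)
The plan is to argue by contradiction: suppose $M$ is rank-maximal in $G$, satisfies $M\subseteq \mathcal{E}'$, but fails to be rank-maximal in $G'$. Then there is some matching $N$ of $G'$ with $N \succ M$ in $G'$. I would then exhibit $N$ as a matching of $G$ and show that $N\succ M$ in $G$ as well, contradicting the rank-maximality of $M$ in $G$.

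First I would observe that, because $\mathcal{E}'\subseteq \mathcal{E}$, every matching of $G'$ is also a matching of $G$; so $N$ is a matching of $G$. Second I would spell out why the signature of a matching $L\subseteq \mathcal{E}'$ is essentially the same whether we compute it in $G$ or in $G'$: the signature of $L$ counts, for each rank $i$, the number of edges of $L$ whose rank is $i$. Since every edge of $\mathcal{E}'$ carries the same rank in the two graphs, the counts coincide. If $G$ has more distinct ranks than $G'$, the signature computed in $G$ is just the one from $G'$ padded with zeros in the extra coordinates, which does not affect the lexicographic ordering $\succ$.

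Applying this observation to both $M$ and $N$, the relation $N\succ M$ in $G'$ transfers verbatim to $N\succ M$ in $G$, contradicting the assumption that $M$ is rank-maximal in $G$. Hence $M$ must be rank-maximal in $G'$.

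The only potential pitfall is the mismatch between the maximum rank $r$ appearing in the definition of signatures for $G$ versus $G'$; the argument above handles this by noting that lex comparison of $r$-tuples and of the corresponding zero-padded tuples agree. Once that small bookkeeping is done, the proof is essentially one line of contradiction, so no deeper structural lemma about rank-maximal matchings is needed.
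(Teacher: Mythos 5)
Your proof is correct and follows essentially the same route as the paper's: observe that any matching contained in $\mathcal{E}'$ is a matching of both graphs with the same signature, then derive a contradiction with the rank-maximality of $M$ in $G$. Your explicit handling of the zero-padding when $G$ has more ranks than $G'$ is a small bookkeeping detail the paper leaves implicit, but the argument is the same.
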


\begin{proof}
Let $M$ be a rank-maximal matching of $G$. Moreover, we assume that $M$ is a matching of $G'$. Since each edge of $\mathcal{E}'$ has the same rank in both $G$ and $G'$, the signature of $M$ is the same in both graphs. Therefore the signature of a rank-maximal matching of $G'$ is not worse than the signature of a rank-maximal matching of $G$. Suppose the signature of a rank-maximal matching of $G'$ is strictly better than the signature of a rank-maximal matching of $G$. If $M'$ is a rank-maximal matching of $G'$, by the construction of $G$ and $G'$, $M'$ is a matching of $G$ and $M'$ has the same signature in both $G$ and $G'$. But $M \succ M'$. Thus $M$ is not a rank-maximal matching of $G$ which is a contradiction.  
\qed
\end{proof}

\section{Properties of a preference list and strategy `best nonfirst'} \label{true}
Here we note down some properties of the preference list of any applicant.  Let us assume that the preference list of $a_1$ in $G$
has the form $
( P_1, \hspace{.2cm}P_2,\hspace{.2cm}P_3, \ldots,  P_i, \ldots, P_t)
$, where $P_i$ denotes the set of posts of rank $i$ in the preference list of $a_1$. $G \setminus \{a_1\}$ denotes the graph obtained from $G$ after the removal of the vertex $a_1$ from $G$. We define an $f$-post of $G$ in a similar way as in the popular matching problem \cite{AbrahamIKM07}

\begin{definition} \label{fpost}
A post is called an $f$-post of $G$ if and only if it  belongs to $O(G_1 \setminus \{a_1\})$ or $U(G_1 \setminus \{a_1\})$, where $G_1=(A \cup P, \mathcal{E}_1)$. The remaining posts of $G$ are called non-$f$-posts. 
\end{definition}
\begin{restatable}{lemma}{lemmathree} \label{basic}
If $P_1$ contains a post that is a non-$f$-post, then $a_1$ is always  matched to one of such posts in a rank-maximal matching of $G$ and thus to one of his first choices.
\end{restatable}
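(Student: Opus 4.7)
The plan is to show two things in turn: (i) in any rank-maximal matching $M$ of $G$, the applicant $a_1$ is matched by a rank-$1$ edge, and (ii) that rank-$1$ neighbour must be a non-$f$-post. Throughout, I will work with $G_1 \setminus \{a_1\}$ and use the classical Gallai--Edmonds style partition into $E, O, U$ guaranteed by the lemma on page 1.

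First I would fix any maximum matching $N$ of $G_1 \setminus \{a_1\}$ and let $p \in P_1$ be the assumed non-$f$-post; so by Definition \ref{fpost} we have $p \in E(G_1 \setminus \{a_1\})$. By definition of $E$, there is an $N$-alternating even-length path $Q$ from some $N$-unmatched vertex $u$ to $p$ (of length $0$ if $p$ itself is unmatched). Prepending the edge $(a_1,p)$ to $Q$ produces an $N$-augmenting path in $G_1$, so the maximum matching size in $G_1$ is strictly larger than $|N|$. Consequently every maximum matching of $G_1$ must use $a_1$ (otherwise it would be a matching of $G_1 \setminus \{a_1\}$ of size $>|N|$, contradicting maximality of $N$). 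Combining this with invariant 2 of Algorithm \ref{alg1}, which says that the rank-maximal matching $M$ restricted to $\mathcal{E}_1$ is a maximum matching of $G_1$ (the signature component $s_1$ being the maximum possible), we conclude that $a_1$ is matched to some rank-$1$ post in every rank-maximal matching of $G$.

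For step (ii), suppose for contradiction that in some rank-maximal matching $M$, $a_1$ is matched to a rank-$1$ post $q$ that is an $f$-post, i.e., $q \in O(G_1 \setminus \{a_1\}) \cup U(G_1 \setminus \{a_1\})$. Let $M_1 = M \cap \mathcal{E}_1$, which by step (i) is a maximum matching of $G_1$ of size $|N|+1$. Deleting the edge $(a_1,q)$ from $M_1$ yields a matching $N'$ in $G_1 \setminus \{a_1\}$ of size $|N|$, hence a maximum matching there, in which $q$ is unmatched. But item 2 of the lemma on $E,O,U$ forces every vertex of $O \cup U$ to be matched in every maximum matching of $G_1 \setminus \{a_1\}$, contradicting the fact that $q$ is free in $N'$. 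Thus the rank-$1$ post assigned to $a_1$ in $M$ must be a non-$f$-post, completing the proof.

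The main obstacle, and the only nontrivial step, is the augmenting-path construction in step (i): one must verify that the even vertex $p$ really yields an $N$-augmenting path after adjoining $a_1$, taking care of the degenerate case where $p$ is itself unmatched in $N$. The rest is a clean application of the invariants of Algorithm \ref{alg1} together with the structural properties of $E$, $O$, $U$.
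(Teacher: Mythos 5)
Your proof is correct and follows essentially the same route as the paper: build an $M'$-augmenting (in your notation $N$-augmenting) path in $G_1$ by prepending $(a_1,p)$ to the even alternating path witnessing $p \in E(G_1 \setminus \{a_1\})$, conclude that $a_1$ is matched by a rank-one edge in every rank-maximal matching, and then rule out $f$-posts as his partner. Your step (ii), which excludes $f$-posts by noting that deleting $(a_1,q)$ from $M \cap \mathcal{E}_1$ would leave a vertex of $O(G_1\setminus\{a_1\}) \cup U(G_1\setminus\{a_1\})$ free in a maximum matching of $G_1 \setminus \{a_1\}$, is only a cosmetic variant of the paper's remark that no $M'$-augmenting path uses an edge $(a_1,p_1)$ with $p_1$ an $f$-post, and is in fact spelled out more carefully than in the paper.
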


\begin{proof}

Let $M'$ be a maximum matching of $G_1 \setminus \{a_1\}$. $M'$ is a matching of $G_1$ but not necessarily of maximum size. $a_1$ is unmatched in $M'$ and $P_1$ contains a post $p$ that is not an $f$ post. Hence, $p$ belongs to $E(G_1 \setminus \{a_1\})$ and there exists an even length $M'$-alternating path $S$  starting at $p$ and ending at some unmatched vertex $p'$ in $G_1 \setminus \{a_1\}$. Therefore, $S$ together with the edge $(a_1, p)$ forms an augmenting path in the graph $G_1$.  If we apply any such augmenting path we obtain a maximum matching of $G_1$ and $a_1$ is matched in every maximum matching of $G_1$. Let us also notice that no edge $(a_1, p_1)$ such that $p_1$ is an $f$-post from $P_1$ belongs to an $M'$-augmenting path. This shows that $a_1$ is matched in every maximum matching of $G_1$ and to a non-$f$-post from $P_1$.   This completes our proof.   
\qed

\end{proof}

Next lemma shows that if $a_1$ is not matched to a rank one post in some rank-maximal matching of $G$, then an $f$-post may be defined
in an alternative way that takes into account the whole graph $G_1$. This property is needed during the construction of strategy `min max'.

\begin{restatable}{lemma}{lemmafour} \label{flemma}
Let us assume that $a_1$ is not matched to a rank one post in some rank-maximal matching of $G$. Then a post is an $f$-post if and only if it belongs to $O(G_1)$ or $U(G_1)$.
\end{restatable}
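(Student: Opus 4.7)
The plan is to reduce the claim to an equality of the $E/O/U$-decompositions of $G_1$ and $G_1 \setminus \{a_1\}$ restricted to posts, and then exploit bipartite parity.

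First I would extract from the hypothesis that $a_1 \in E(G_1)$. By the algorithm invariants, any rank-maximal matching $M$ of $G$ restricted to $\mathcal{E}_1$ is a maximum matching of $G_1$. By hypothesis there is such an $M$ in which $a_1$ is not matched to a rank-$1$ post, so $M \cap \mathcal{E}_1$ is a maximum matching of $G_1$ leaving $a_1$ unmatched. In particular the maximum matching sizes of $G_1$ and $G_1 \setminus \{a_1\}$ are equal, and there is a common maximum matching $M^*$ that leaves $a_1$ unmatched. I would use this same $M^*$ to compute both decompositions. The set of vertices unmatched by $M^*$ in $G_1 \setminus \{a_1\}$ is exactly the set of vertices unmatched by $M^*$ in $G_1$ minus $a_1$.

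The heart of the argument is the claim that for every post $p \in P$,
\[
p \in E(G_1) \iff p \in E(G_1 \setminus \{a_1\}).
\]
The $(\Leftarrow)$ direction is immediate: any $M^*$-alternating path in $G_1 \setminus \{a_1\}$ from an unmatched vertex to $p$ is also such a path in $G_1$. For $(\Rightarrow)$, suppose $\pi$ is an even-length $M^*$-alternating path in $G_1$ from an unmatched vertex $u$ to $p$. Since $G_1$ is bipartite with sides $A$ and $P$, and $a_1 \in A$ while $p \in P$, every walk from $a_1$ to $p$ has odd length; hence $u \ne a_1$, so $u$ is also unmatched in $G_1 \setminus \{a_1\}$. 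Moreover, because $a_1$ is unmatched by $M^*$, the vertex $a_1$ cannot occur in the interior of any $M^*$-alternating path (interior vertices are matched by the $M^*$-edge that lies on the path), so $\pi$ avoids $a_1$ and is therefore a valid witness in $G_1 \setminus \{a_1\}$.

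Finally, since $E, O, U$ partition the vertex set in each graph, the post-equality $E(G_1) \cap P = E(G_1 \setminus \{a_1\}) \cap P$ immediately gives
\[
\bigl(O(G_1) \cup U(G_1)\bigr) \cap P \;=\; \bigl(O(G_1 \setminus \{a_1\}) \cup U(G_1 \setminus \{a_1\})\bigr) \cap P,
\]
which by Definition \ref{fpost} is exactly the statement of the lemma. The only delicate point is the bipartite parity observation that rules out $a_1$ as the starting vertex of the witnessing alternating path; everything else is bookkeeping about the chosen maximum matching $M^*$.
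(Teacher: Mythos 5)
Your proof is correct and follows essentially the same route as the paper: both fix a maximum matching of $G_1$ that leaves $a_1$ unmatched (hence also maximum in $G_1 \setminus \{a_1\}$) and observe that the alternating paths witnessing evenness of a post avoid $a_1$, using bipartite parity to rule out $a_1$ as the starting vertex. Your packaging of the argument as the single equality $E(G_1)\cap P = E(G_1\setminus\{a_1\})\cap P$ followed by complementation is just a tidier rendering of the paper's case analysis on reachability, with the ``$a_1$ cannot be an interior vertex'' point made explicit.
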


\begin{proof}
We say that a vertex $v$ has the same type in graphs $G$ and $H$ if $v \in (E(G) \cap E(H)) \cup (O(G) \cap O(H)) \cup (U(G) \cap U(H))$.

 We have assumed that $a_1$ is not matched to his  rank one post in every rank-maximal matching. By the properties of the rank-maximal matchings, $a_1 \in E(G_1)$. Hence, we can find a maximum matching $M$ of $G_1$ in which $a_1$ is an unmatched vertex.  Notice that $M$ is a maximum matching of both $G_1$ and $G_1 \setminus \{a_1\}$. Hence, a vertex that is reachable from a free vertex other than $a_1$ in $G_1$, has the same type in both $G_1$ and $G_1 \setminus \{a_1\}$. The vertices from $P$ that are reachable only from $a_1$ by an alternating path in $G_1$, belong to $O(G_1)$. These vertices become unreachable in $G_1 \setminus \{a_1\}$. Finally an unreachable vertex in $G_1$ is also an unreachable vertex in $G_1 \setminus \{a_1\}$. Therefore, a post that belongs to $O(G_1)$ or $U(G_1)$, is an $f$-post. 
 
 Conversely, let us consider an $f$-post $p$. $p$ is either an odd or an unreachable vertex in $G_1 \setminus \{a_1\}$. Suppose $p$ becomes an even vertex in $G_1$. We have proved in the previous part that $p$ must be reachable from vertices other than $a_1$. Also we know that a vertex that is reachable from a free vertex other than $a_1$ in $G_1$, has the same type in both $G_1$ and $G_1 \setminus \{a_1\}$. Hence, $p$ is an even vertex in $G_1 \setminus \{a_1\}$, which is a contradiction. Therefore, $p$ is either an odd or an unreachable vertex in $G_1$.   
 \qed

\end{proof}

 The  lemma below characterises the set of potential posts $a_1$ can be matched to, if he provides his true preference list. 

\begin{restatable}{lemma}{lemmafive}\label{flemma2}
Let $G$ be a bipartite graph and $i$ be  the rank of the highest ranked non-$f$-post in the preference list of $a_1$.  If $a_1$ is not matched to a rank one post, then $a_1$ can only be matched to a post of rank $i$ or greater  than $i$ in any rank-maximal matching of $G$.  
\end{restatable}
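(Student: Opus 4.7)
The plan is to combine Lemma \ref{flemma} with the edge-pruning that Algorithm \ref{alg1} performs after phase $1$. Fix any rank-maximal matching $M$ of $G$ in which $a_1$ is not matched to a rank one post; this is exactly the hypothesis under which Lemma \ref{flemma} applies, so the $f$-posts of $G$ coincide with the posts in $O(G_1) \cup U(G_1)$.

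Next, I invoke the behaviour of the algorithm in phase $1$: step $6$ deletes every edge of rank greater than $1$ that is incident to a vertex in $O(G_1) \cup U(G_1)$. Consequently, in the reduced graph $G'$ the only edges of $a_1$ that reach an $f$-post have rank $1$, while every other surviving edge of $a_1$ goes to a non-$f$-post. By invariant $1$ of the algorithm $M \subseteq G'$, so the edge of $M$ at $a_1$ (if any) is of one of these two forms. The hypothesis rules out the rank-$1$ possibility, hence $a_1$ is matched in $M$ to a non-$f$-post; by the choice of $i$ as the smallest rank at which $a_1$'s list contains a non-$f$-post, this edge has rank at least $i$.

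The whole argument is bookkeeping once Lemma \ref{flemma} is in hand; the only point that needs attention is that the definition of $f$-post is stated in terms of $G_1 \setminus \{a_1\}$ (Definition \ref{fpost}) whereas Algorithm \ref{alg1} prunes against $O(G_1) \cup U(G_1)$, and Lemma \ref{flemma} is precisely what identifies the two sets under our hypothesis. It is also worth noting that if $a_1$'s list had no non-$f$-post at all, then after phase $1$ every surviving edge of $a_1$ would be of rank $1$ and $a_1$ would be matched to a rank one post in every rank-maximal matching, contradicting the hypothesis; so $i$ is well-defined whenever the statement has content, and no further obstacle arises.
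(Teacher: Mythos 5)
Your proof is correct and follows essentially the same route as the paper's: invoke Lemma~\ref{flemma} to identify the $f$-posts with the posts in $O(G_1)\cup U(G_1)$, observe that phase $1$ of Algorithm~\ref{alg1} deletes every edge of rank greater than $1$ incident to such vertices, and conclude (via containment of every rank-maximal matching in the reduced graph) that $a_1$'s matched edge is either of rank $1$ or goes to a non-$f$-post and hence has rank at least $i$. Only your closing aside slightly overstates the degenerate case --- if $a_1$'s list contained no non-$f$-post he could also simply be left unmatched rather than necessarily matched at rank one --- but this is harmless, since the lemma presupposes that $i$ exists.
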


\begin{proof}
Let $M$ be a rank-maximal matching of $G$. While computing a rank-maximal matching of $G$ we start by finding a maximum matching of $G_1$. Since $a_1$ is not matched to a rank one post in every rank-maximal matching of $G$, by Lemma \ref{flemma}, the set of $f$-posts contains every vertex from $O(G_1)$ and $U(G_1)$.  Hence, we delete every edge, that has  rank bigger than $1$, incident to an $f$-post. Thus, every edge $e=(a_1, p)$ such that $p$ is an $f$-post and $rank(a,p) > 1$  gets deleted after the first iteration of the algorithm. Therefore, no such edge can belong to a rank-maximal matching and  $a_1$ can only be matched to a post of rank $i$ or worse.\qed

\end{proof}

The above lemmas provide us with an easy method of manipulation that guarantee that  $a_1$ can always be matched to the best non-$f$-post in his true preference list. Lemma \ref{flemma2} shows  that the most preferred non-$f$-post of $a_1$ is ranked not worse than the second most preferred post he can be matched to, when he is truthful. We assume that $a_1$ is not matched to a rank one post in every rank-maximal matching of $G$. Otherwise, the manipulator has no incentive to cheat. Let $p_i \in P_i$ be a highest ranked non-$f$ post in the true preference list of $a_1$. We put $p_i$ as a rank $1$ post in the falsified preference list of $a_1$. Next, we fill the falsified preference list of $a_1$ arbitrarily. This completes the description of strategy `best nonfirst'.

 \begin{algorithm} [!ht]
\caption{Strategy `best nonfirst'}
\label{alg2}
\begin{algorithmic}[1]
\State $p_i \leftarrow$  a highest ranked non-$f$-post in the true preference list of $a_1$.
\State $p_i \leftarrow$ the rank one post in the falsified preference list of $a_1$ in $H$
\State Fill the rest of the preference list of $a_1$ in an arbitrary order 
\State Output $H$
\end{algorithmic}
\end{algorithm}

\begin{theorem}
The graph $H$ computed by Algorithm \ref{alg2} is a strategy `best nonfirst'.
\end{theorem}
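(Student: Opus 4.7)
The plan is to reduce the statement to a direct application of Lemmas~\ref{basic} and~\ref{flemma2}, using the simple but essential observation that the set of $f$-posts depends only on $G_1 \setminus \{a_1\}$ and is therefore unchanged when we alter $a_1$'s preference list to obtain $H$. Concretely, because the $f$-posts are defined in terms of the decomposition $E(G_1 \setminus \{a_1\}) \cup O(G_1 \setminus \{a_1\}) \cup U(G_1 \setminus \{a_1\})$ and $G_1 \setminus \{a_1\} = H_1 \setminus \{a_1\}$ (erasing $a_1$ deletes every edge that distinguishes the two preference lists), a post is an $f$-post of $G$ if and only if it is an $f$-post of $H$. In particular $p_i$ is still a non-$f$-post in $H$.

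Next, I would apply Lemma~\ref{basic} to $H$: in $H$, the rank-one set of $a_1$ is the single post $\{p_i\}$, which is a non-$f$-post by the previous step. Hence, in every rank-maximal matching of $H$, the applicant $a_1$ is matched to a non-$f$-post of rank one, which can only be $p_i$. So the output $H$ guarantees that $a_1$ receives exactly $p_i$ under the central authority, regardless of which rank-maximal matching is chosen.

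Finally, I would argue that $p_i$ is indeed no worse than the best non-first-choice post $a_1$ can obtain truthfully. By hypothesis, $a_1$ is not matched to a rank-one post in some rank-maximal matching of $G$; we may also assume he is not guaranteed a rank-one post (otherwise he has no incentive to cheat). Lemma~\ref{flemma2} then tells us that every post $a_1$ can be matched to under a rank-maximal matching of $G$ is of rank at least $i$, where $i$ is the rank of the most preferred non-$f$-post of $a_1$, namely $p_i$. So $p_i$ is itself the best post of rank greater than one that $a_1$ can obtain by being truthful, and the strategy produced by Algorithm~\ref{alg2} delivers precisely $p_i$; this is the definition of strategy `best nonfirst'.

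The only mild subtlety I expect is the invariance of the $f$-post classification under the switch from $G$ to $H$, which is really just unwinding Definition~\ref{fpost}; the remainder of the argument is a straight chain of citations to the earlier lemmas.
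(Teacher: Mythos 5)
Your proof is correct and follows essentially the same route as the paper, which simply invokes Lemma~\ref{basic} (applied to $H$, whose rank-one set for $a_1$ is the non-$f$-post $p_i$) together with the remark, stated just before the algorithm, that by Lemma~\ref{flemma2} the highest ranked non-$f$-post is at least as good as any post of rank greater than one obtainable truthfully. The only addition you make is to spell out explicitly that the $f$-post classification is unchanged when $a_1$'s list is falsified (since it is defined via $G_1 \setminus \{a_1\}$), a point the paper leaves implicit.
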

The correctness of Algorithm \ref{alg2} follows from Lemma \ref{basic}.
\section{Example of Strategy `best non-first' Not Being Optimal}\label{appendixexample}

We have given a strategy `best nonfirst' in the previous part of the paper. But that strategy may not be  optimal in the sense that the manipulator may arrange to get a post of even better rank. Let us consider an example from Figure $1$. Let us assume that $a_1$ is a manipulator. The first preference table contains  true preference list of all applicants $a_1, \ldots, a_6$. $a_1$ is matched to $p_5$ in every rank-maximal matching of this instance. $p_3$ is the best non-$f$-post in the preference list of $a_1$. The second preference table contains a falsified preference list of $a_1$.  Here $a_1$  adopted the strategy  `best nonfirst' and as a result he is matched to $p_3$ in every rank-maximal matching. We can fill the rest of the preference list of $a_1$ arbitrarily. Consider now the third preference table, in which $a_1$ falsifies his preference list in yet another way. By presenting this falsified preference list $a_1$ contrives to get matched to $p_2$ in every rank-maximal matching. Post $p_2$ is also his true first choice. Hence getting matched to $p_3$ is not an optimal strategy for $a_1$.

\begin{figure}[!ht]
\centering 
\begin{minipage}[b]{0.4\textwidth}
\begin{align*} 
&a_1 \hspace{.6cm} p_2 \hspace{.2cm}p_1 \hspace{.2cm}p_3 \hspace{.2cm}\underline{p_5} \hspace{.2cm}p_4\\
&a_2 \hspace{.6cm} \underline{p_1} \hspace{.2cm}p_2 \hspace{.2cm}p_3 \hspace{.2cm}p_4 \hspace{.2cm}p_5\\
&a_3 \hspace{.6cm} p_1 \hspace{.2 cm}p_2 \hspace{.2 cm}p_3 \hspace{.2 cm}\underline{p_4} \hspace{.2 cm}p_5\\
&a_4 \hspace{.6cm} p_1 \hspace{.2cm}p_2 \hspace{.2cm}\underline{p_3} \hspace{.2cm}p_4 \hspace{.2cm}p_5\\
&a_5 \hspace{.6cm} \underline{p_2} \hspace{.2cm}p_1 \hspace{.2cm}p_3 \hspace{.2cm}p_6 \hspace{.2cm}p_4 \hspace{.2cm}p_5\\
&a_6 \hspace{.6cm} \underline{p_6}
\end{align*}
\end{minipage}
\begin{minipage}[b]{0.4\textwidth}
\begin{align*} 
&a_1 \hspace{.6cm} \underline{p_3}\\
&a_2 \hspace{.6cm} \underline{p_1} \hspace{.2cm}p_2 \hspace{.2cm}p_3 \hspace{.2cm}p_4 \hspace{.2cm}p_5\\
&a_3 \hspace{.6cm} p_1 \hspace{.2 cm}p_2 \hspace{.2 cm}p_3 \hspace{.2 cm}\underline{p_4} \hspace{.2 cm}p_5\\
&a_4 \hspace{.6cm} p_1 \hspace{.2cm}p_2 \hspace{.2cm}p_3 \hspace{.2cm}p_4 \hspace{.2cm}\underline{p_5}\\
&a_5 \hspace{.6cm} \underline{p_2} \hspace{.2cm}p_1 \hspace{.2cm}p_3 \hspace{.2cm}p_6 \hspace{.2cm}p_4 \hspace{.2cm}p_5\\
&a_6 \hspace{.6cm} \underline{p_6}
\end{align*}
\end{minipage}
\begin{minipage}[b]{0.4\textwidth}
\begin{align*} 
&a_1 \hspace{.6cm} \underline{p_2} \hspace{.2cm}p_1 \hspace{.2cm}p_6 \hspace{.2cm}p_3 \hspace{.2cm}p_4 \hspace{.2cm}p_5\\
&a_2 \hspace{.6cm} \underline{p_1} \hspace{.2cm}p_2 \hspace{.2cm}p_3 \hspace{.2cm}p_4 \hspace{.2cm}p_5\\
&a_3 \hspace{.6cm} p_1 \hspace{.2 cm}p_2 \hspace{.2 cm}p_3 \hspace{.2 cm}p_4 \hspace{.2 cm}\underline{p_5}\\
&a_4 \hspace{.6cm} p_1 \hspace{.2cm}p_2 \hspace{.2cm}p_3 \hspace{.2cm}\underline{p_4} \hspace{.2cm}p_5\\
&a_5 \hspace{.6cm} p_2 \hspace{.2cm}p_1 \hspace{.2cm}\underline{p_3} \hspace{.2cm}p_6 \hspace{.2cm}p_4 \hspace{.2cm}p_5\\
&a_6 \hspace{.6cm} \underline{p_6}
\end{align*}
\end{minipage}
\caption{Example that shows that strategy `best non-first' may not be optimal. The underlined posts are those matched  to the corresponding applicant. We can fill the rest of the positions arbitrarily because the presence of those posts will not affect the form of any rank-maximal matching}
\end{figure}

\section{Strategy `min max'}

The example in the previous section clearly shws that the strategy `best nonfirst' may not provide an optimal solution. In this section we introduce the strategy `min max' that optimizes  the worst post $a_1$ can be matched to in a rank-maximal matching.

\subsection{Critical Rank} \label{critical}

The notion that is going to be very useful while constructing a preference list is that of a critical rank. 

\begin{definition} \label{cr}
Let $G=(A \cup P, \mathcal{E})$ be a bipartite graph with ranks on the edges belonging to $\{1,2, \ldots, r\}$. Suppose that $a \in A, p \in P$ and $(a,p)$ does not belong to $\mathcal{E}$. Let $H=(A \cup P, \mathcal{E} \cup \{(a,p)\})$. We define a critical rank of $(a,p)$ in $H$ as follows. 

If there exists a natural number   $1 \leq i \leq r$  such that  $a \in O(G'_i) \cup U(G'_i)$ or $p  \in O(G'_i) \cup U(G'_i)$, then 
the critical rank of $(a,p)$ in $H$ is equal to $\min \{i: (O(G'_i) \cup U(G'_i)) \cap \{a,p\} \neq \emptyset\}$. Otherwise, the critical rank of $(a,p)$ is defined as $r+1$.
\end{definition}

The next lemma reveals an interesting property of the critical rank of an edge $(a,p)$.

\begin{restatable}{lemma}{lemmasix} \label{crl}
Let $G,H$ and $(a,p)$ be as in Definition \ref{cr}.  Then the critical rank of $(a,p)$ is $c$ if and only if 
\begin{enumerate}
\item for every $1 \leq i <c$, the edge $(a,p)$ belongs to every rank-maximal matching of  $H_i$, in which $(a,p)$ has rank $i$, and
\item for every $c < i \leq r$, the edge $(a,p)$ does not belong to any rank-maximal matching of $H_i$, in which $(a,p)$ has rank $i$, and
\item there exists a  rank-maximal matching $M$ of $H_c$, in which $(a,p)$ has rank $c$ such that  $(a,p)$ is not contained in $M$.
\end{enumerate}
\end{restatable}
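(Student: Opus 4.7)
The plan is to prove the equivalence in both directions. For the forward direction I would fix the critical rank at $c$ and verify the three conditions separately, invoking the invariants of the rank-maximal algorithm recalled after Algorithm~\ref{alg1}. For the reverse direction I would argue by contradiction from any hypothetical critical rank $c' \neq c$ using the forward direction.

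For condition~1, the plan is to use the standard Dulmage--Mendelsohn fact that adding a non-edge whose endpoints are both in $E$ creates an augmenting path. For $i < c$, by definition of critical rank both $a, p \in E(G'_i)$, and since the first $i-1$ phases of the algorithm on $H_i$ coincide with those on $G$, at phase $i$ one computes a maximum matching in $G'_i \cup \{(a,p)\}$. The augmenting-path argument gives that this maximum matching has size $|M_i^G|+1$ and that $(a,p)$ lies in every such maximum matching, because removing it would yield a matching of $G'_i$ strictly larger than $M_i^G$. Invariant~(3) then forces $(a,p)$ into every rank-maximal matching of $H_i$. For condition~2, the observation is that the first $c$ phases of the algorithm on $H_i$ for $i > c$ are identical to those on $G$ since $(a,p)$ is not yet present; by definition of critical rank, at least one of $a, p$ belongs to $O(G'_c) \cup U(G'_c)$, and the pruning step at the end of phase $c$ deletes every rank-$>c$ edge incident to $O(G'_c) \cup U(G'_c)$, so $(a,p)$ disappears from $H_i'_{c+1}$ onwards and cannot appear in any rank-maximal matching.

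For condition~3, I would first note that since at least one of $a, p$ is in $O(G'_c) \cup U(G'_c)$, adding $(a,p)$ to $G'_c$ cannot create an augmenting path (an augmenting path through $(a,p)$ would require both endpoints in $E$), so $|M_c^{H_c}| = |M_c^G|$. I would then choose $M_c^{H_c} := M_c^G$, which omits $(a,p)$, and extend this choice by running the algorithm for phases $c+1,\ldots,r$. The main obstacle, and the part I expect will require the most care, is to show that $(a,p)$ never becomes unavoidable, i.e., that a maximum matching of $H_c'_j$ avoiding $(a,p)$ exists for every $j \ge c$. The plan is to leverage invariant~(4) — the reduced graphs $H_c'_j$ are intrinsic and do not depend on the matching chosen — and argue inductively that an augmenting path forced to use $(a,p)$ at phase $j$ would, via the structural containment relating $H_c'_j$ and the run of the algorithm on $G$, translate into both $a$ and $p$ being even already in $G'_c$, contradicting the definition of the critical rank.

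For the reverse direction, suppose conditions 1--3 hold and let $c'$ denote the true critical rank. Applying the forward direction to $c'$, if $c' < c$ then the forward analogue of condition~3 at $c'$ produces a rank-maximal matching of $H_{c'}$ not containing $(a,p)$, contradicting our hypothesis condition~1 (valid at $i = c' < c$), which puts $(a,p)$ in every rank-maximal matching of $H_{c'}$. Symmetrically, if $c' > c$ then forward condition~1 applied at $i = c < c'$ forces $(a,p)$ into every rank-maximal matching of $H_c$, contradicting our hypothesis condition~3. Hence $c' = c$, completing the proof.
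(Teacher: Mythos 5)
Your handling of conditions 1 and 2 and of the converse direction is sound and essentially the paper's argument: the paper packages conditions 1 and 3 into a single claim characterizing when $(a,p)$ lies in every rank-maximal matching of $H_i$ (namely, when $a,p\in E(G'_j)$ for all $j\le i$), proves it with the same augmenting-path reasoning you use, and obtains condition 2 from the pruning step at phase $c$, exactly as you do. (A small slip: for the last step of condition 1 you cite invariant (3), whereas what is actually needed is invariants (1)--(2), i.e. that every rank-maximal matching of $H_i$ lies in the reduced graph $H'_i$ and is a maximum matching there.) The genuine gap is in your treatment of condition 3. Condition 3 concerns only the truncated instance $H_c$, the graph of the edges of rank at most $c$ with $(a,p)$ given rank $c$; there are no phases $c+1,\ldots,r$ to run. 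Your first observation already finishes it: since one of $a,p$ lies in $O(G'_c)\cup U(G'_c)$, adding $(a,p)$ to $G'_c$ creates no augmenting path, so a maximum matching of $G'_c$ remains a maximum matching of $H'_c$ and hence a rank-maximal matching of $H_c$, and it avoids $(a,p)$ because $(a,p)\notin\mathcal{E}$. This is precisely the paper's short argument.

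The ``main obstacle'' you then set yourself --- showing that for every $j\ge c$ there is a maximum matching of the $j$-th reduced graph avoiding $(a,p)$, i.e. that $(a,p)$ stays avoidable in the full graph $H$ --- is not only unnecessary, it is false, so the inductive argument you sketch cannot be completed. Counterexample: applicants $a,x$ and posts $p,q$; in $G$ the applicant $x$ has $p$ at rank $1$ and $q$ at rank $2$, while $a$ has no edge. Then $x,p\in U(G'_1)$, the edge $(x,q)$ is deleted after phase $1$, and the critical rank of $(a,p)$ is $c=1$. Condition 3 holds: $\{(x,p)\}$ is a rank-maximal matching of $H_1$ avoiding $(a,p)$. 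But in the full graph $H$ with $(a,p)$ of rank $1$, the post $p$ is odd rather than unreachable in $H'_1$, so $(x,q)$ is no longer deleted, and the unique rank-maximal matching of $H$ is $\{(a,p),(x,q)\}$, with signature $(1,1)$; every maximum matching of $H'_2$ contains $(a,p)$. Thus avoidability at phase $c$ does not propagate to later phases, and any argument deriving a contradiction with ``both $a$ and $p$ even in $G'_c$'' must fail. Dropping this part and reading condition 3 as stated (about $H_c$ only), your proposal becomes correct and coincides with the paper's proof.
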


\begin{proof}
We start by proving the following claim.

\begin{claim}
Suppose that $e=(a,p)$ has rank $i$ in the graph $H$. Then $(a,p)$ belongs to every rank-maximal matching of $H_i$ if and only if for every
$j \leq i$ both $a \in E(G'_j)$ and $p \in E(G'_j)$.
\end{claim}

Since $(a,p)$ has rank $i$ in $H$ and $G$ differs from $H$ only by the existence of the edge $(a,p)$, we know that the reduced graphs $G'_j$
and $H'_j$ are the same for each $j<i$. Also, the reduced graphs $H'_i$ and $G'_i$ may be the same or differ by the existence of the edge $e$. 

If for some $j<i$ it holds that $a \in O(G'_j) \cup U(G'_j)$ or $p \in O(G'_j) \cup U(G'_j)$, then the edge $(a,p)$ does not belong to $H'_i$, because it is removed at the beginning of phase $j+1$ during the computation of a rank-maximal matching of $H$. Thus in this
case $e$ does not belong to any rank-maximal matching of $H_i$. If for every $j <i$, $a \in E(G'_j)$ and $p \in E(G'_j)$ and  $a \in  O(G'_i) \cup U(G'_i)$  or $p \in  O(G'_i) \cup U(G'_i)$, then 
$e$ belongs to $H'_i$, but there exists a rank-maximal matching of $H_i$ that does not contain $e$. This is so because the addition of $e$ to $G'_i$ does not create any augmenting path in $H'_i$, therefore a maximum matching  of $G'_i$ is also a maximum matching of $H'_i$ and thus
a rank-maximal matching of $H_i$. This shows that in this case every rank-maximal matching of $G_i$ is also rank-maximal in $H_i$. Hence, there exists a rank-maximal matching of $H_i$ that does not contain $(a,p)$.

Assume now that for every $j \leq i$ both $a \in E(G'_j)$ and $p \in E(G'_j)$.  This means that the reduced graph $H'_i$ does contain $(a,p)$.
Any rank-maximal matching of $H_i$ is a maximum matching of $H'_i$. Let $M_i$ denote a maximum matching of $G'_i$. Since $a \in E(G'_i)$
and $p \in E(G'_i)$, $(a,p)$ belongs to every $M_i$-augmenting path - because $a$ is the endpoint of some even length $M_i$-alternating path
ending at a free vertex $a' \in A$ and similarly, $p$  is the endpoint of some even length $M_i$-alternating path
ending at a free vertex $p' \in P$. Together with $(a,p)$ these paths form an $M_i$-augmenting path in $H'_i$. Additionally, we notice that $(a,p)$ belongs to every $M_i$-augmenting path and thus to every maximum matching of $H'_i$ and hence to every rank-maximal matching of $H_i$.

Let us now prove the other direction of the claim and suppose that $(a,p)$ belongs to every rank-maximal matching of $H_i$. This means that $(a,p)$ belongs to every maximum matching of $H'_i$. Then $(a,p)$ must be contained in $H'_i$ and by the above arguments, we know that for every $j < i$ both $a \in E(G'_j)$ and $p \in E(G'_j)$. No maximum matching of $G'_i$ contains $(a,p)$ - therefore a maximum matching of $H'_i$ must be bigger by one than a maximum matching of $G'_i$. This means that $(a,p)$ must belong to a path augmenting with respect to a maximum matching of
$G'_i$, which means that the endpoints of $(a,p)$ belong to $E(G'_i)$. This way we have proved the claim.

The claim, which we have just proved, shows that if (i) for every $1 \leq i <c$ the edge $(a,p)$ belongs to every rank-maximal matching of  $H_i$, in which $(a,p)$ has rank $i$ and (ii) there exists a  rank-maximal matching $M$ of $H_c$, in which $(a,p)$ has rank $c$ such that  $(a,p)$ is not contained in $M$, then $c= \min \{i: (O(G'_i) \cup U(G'_i)) \cap \{a,p\} \neq \emptyset\}$ and hence, $c$ is the critical rank of $(a,p)$.

The claim also shows that if $c$ is the critical rank of $(a,p)$, then for every $i <c$ the edge $(a,p)$ belongs to every rank-maximal matching of $H_i$, in which $(a,p)$ has rank $i$ and there exists a rank-maximal matching of $H_c$, in which $(a,p)$ has rank $c$ that does not contain $(a,p)$. 
It remains to prove that if $c$ is the critical rank of $(a,p)$, then for every $c < i \leq r$ the edge $(a,p)$ belongs to no rank-maximal matching of $H_i$, in which $(a,p)$ has rank $i$. This follows from the fact that any edge of rank $i>c$ incident to a vertex belonging to $O(H'_c) \cup U(H'_c)$ is removed from $H'_i$ and therefore cannot belong to a maximum matching of $H'_i$ and thus cannot be present in any rank-maximal matching of $H_i$. This ends the proof of the lemma.\qed

\end{proof}

\begin{corollary} \label{corollary1}
The critical rank of an edge incident to an $f$-post of $G$ is $1$ in $G$.
\end{corollary}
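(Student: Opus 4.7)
\textbf{Proof plan for Corollary~\ref{corollary1}.} The claim is essentially a definitional unpacking once Lemma~\ref{flemma} is invoked, so the plan is to chain three identifications and then read off the minimum in Definition~\ref{cr}.

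First I would recall that we are working in the manipulation setting, where $a_1$ is not matched to a rank-one post in every rank-maximal matching of $G$ (otherwise there is nothing to manipulate, and indeed the notion of critical rank is being developed exactly for this setting). Under this hypothesis, Lemma~\ref{flemma} says that the set of $f$-posts, defined originally via $G_1 \setminus \{a_1\}$, coincides with $O(G_1) \cup U(G_1)$. Since the rank-maximal matching algorithm starts with $G'_1 = G_1$, we have $O(G_1) \cup U(G_1) = O(G'_1) \cup U(G'_1)$.

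Next, let $p$ be any $f$-post of $G$ and let $e = (a,p)$ be an edge incident to $p$ that is not already in $\mathcal{E}$. By the previous step, $p \in O(G'_1) \cup U(G'_1)$, so
\[
\bigl(O(G'_1) \cup U(G'_1)\bigr) \cap \{a,p\} \neq \emptyset .
\]
Thus $i = 1$ already satisfies the condition appearing in Definition~\ref{cr}, and since the critical rank is the minimum such $i$ and ranks are at least $1$, the critical rank of $(a,p)$ equals $1$.

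The only thing to be slightly careful about is the difference between the definition of an $f$-post (in terms of $G_1 \setminus \{a_1\}$) and the definition of critical rank (in terms of the reduced graphs $G'_i$ of $G$). This is bridged precisely by Lemma~\ref{flemma}, which is why the manipulation hypothesis (that $a_1$ does not always get a rank-one post) is implicitly needed; there is no real combinatorial obstacle beyond making this bookkeeping explicit.
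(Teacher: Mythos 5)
Your derivation is correct and is essentially the argument the paper leaves implicit: the corollary is stated without proof, and the intended justification is exactly your chain — under the standing assumption that $a_1$ is not always matched to a rank-one post, Lemma~\ref{flemma} identifies the $f$-posts with $O(G_1)\cup U(G_1)=O(G'_1)\cup U(G'_1)$, so Definition~\ref{cr} immediately gives critical rank $1$ for any edge incident to such a post. Your remark that the bridge between the $G_1\setminus\{a_1\}$-based definition of $f$-posts and the reduced graph $G'_1$ is the only nontrivial point is exactly right.
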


 The next two lemmas explain the change of the critical rank of $(a,p)$ when we add an $f$-post $p'$ as a rank $1$ post to the preference list of $a$.

\begin{restatable}{lemma}{lemmaseven} \label{decrease}
Let $G$ be a bipartite graph and $a$ be an applicant. Let $p$ be the only post in the preference list of $a$. Suppose that the critical rank of $(a,p)$ is $c$ in $G$. Let  $\hat{G} = G \cup \{(a,p')\}$ where $p'$ is a rank $1$, $f$-post in the preference list of $a$. Then the critical rank of $(a,p)$  is at most $c$ in $\hat{G}$.
\end{restatable}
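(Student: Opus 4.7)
My plan is to compare the rank-maximal matching algorithm's execution on $G^- := G \setminus \{(a,p)\}$ (in which $a$ is isolated) and on $\hat{G}^- := \hat{G} \setminus \{(a,p)\} = G^- \cup \{(a,p')\}$ phase by phase. Since $a$ is isolated in $G^-$, the hypothesis that the critical rank of $(a,p)$ in $G$ equals $c$ translates via Definition~\ref{cr} to $p \in O((G^-)'_c) \cup U((G^-)'_c)$. The goal is then to show that for some $j \le c$ in the $\hat{G}^-$-execution, either $p$ or $a$ lies in $\hat{O}((\hat{G}^-)'_j) \cup \hat{U}((\hat{G}^-)'_j)$, which by the same definition gives the desired conclusion.

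For the phase-$1$ base case, since $p'$ is an $f$-post, $p' \in O((G^-)_1) \cup U((G^-)_1)$. I would show that adding $(a,p')$ does not enlarge the maximum matching of $(G^-)_1$. If $p' \in U((G^-)_1)$, the $U$-component of $p'$ is internally matched and contains no free vertex (by the standard bipartite $E/O/U$ lemma recalled in Section~\ref{background}), so no augmenting path from the new free vertex $a$ can exit the component. If $p' \in O((G^-)_1)$, every even-length alternating path from a free vertex to $M(p')$ must end via the matched edge $(p',M(p'))$, so the alternating subtree rooted at $M(p')$ that avoids $p'$ reaches no free vertex, blocking any augmenting continuation $a \to p' \to M(p') \to \cdots$. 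Hence $a$ remains free in some maximum matching of $(\hat{G}^-)_1$, so $a \in \hat{E}_1$ and $p' \in \hat{O}_1$, and the new edge $(a,p')$ is of type $\hat{E}\hat{O}$ and survives the phase-$1$ reductions.

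A parity observation drives the inductive step: $a \in A$ and $p \in P$ lie on opposite sides of the bipartition, so every alternating path from $a$ to $p$ has odd length. Consequently, any new alternating path in $(\hat{G}^-)'_j$ that uses $(a,p')$ and ends at $p$ reaches $p$ at odd distance from the free vertex $a$, placing $p$ in $\hat{O}_j$ rather than $\hat{E}_j$. Assuming the maximum matching sizes continue to agree, $\hat{\mu}_j = \mu_j$, the old free vertices and their alternating subtrees are preserved, so a fresh even-length path to $p$ could only come from an $(a,p')$-using extension; but $a$ has no other incident edges in $\hat{G}^-$, so any alternating path through $a$ must have $a$ as an endpoint, and by the parity observation no such path to $p$ has even length. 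Combining an old even path to $p$ with the odd $a$-path would produce an augmenting path, contradicting maximality. Hence the invariant ``$p \in O((G^-)'_j) \cup U((G^-)'_j)$ implies $p \in \hat{O}((\hat{G}^-)'_j) \cup \hat{U}((\hat{G}^-)'_j)$'' propagates, and at $j = c$ delivers the conclusion.

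The hard part is the sub-case $p' \in U((G^-)_1)$: the $U$-component of $p'$ becomes reachable in $(\hat{G}^-)_1$, moving some applicants from $U_1$ into $\hat{E}_1$. Their higher-rank edges, deleted in $G^-$, are retained in $\hat{G}^-$, and these extra edges may create augmenting paths at later phases that raise $\hat{\mu}_j$ above $\mu_j$. When such an augmenting path uses $(a,p')$, the edge is forced into every maximum matching of $(\hat{G}^-)_j$, so $a$ is always matched via it; $a$'s type in $(\hat{G}^-)'_j$ then mirrors $p'$'s with the opposite parity (or equals $\hat{U}_j$ when $p' \in \hat{U}_j$). The closing step I would carry out is a careful analysis of how $p'$'s type can evolve in $\hat{G}^-$, showing that by phase $c$ either $a$ enters $\hat{O}_j \cup \hat{U}_j$ for some $j \le c$ directly, or the parity argument above continues to apply and places $p$ in $\hat{O}_c \cup \hat{U}_c$.
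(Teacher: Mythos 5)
Your reduction of the statement to the reduced graphs of $G^- := G\setminus\{(a,p)\}$ and $\hat{G}^- := G^-\cup\{(a,p')\}$ is a legitimate reading of Definition~\ref{cr} (since $a$ is isolated in $G^-$, the hypothesis indeed says that $p$ first leaves the even class at phase $c$), and your phase-$1$ analysis is sound. But there is a genuine gap exactly where you flag it: your inductive step is carried out only under the assumption that the maximum matching sizes agree ($\hat{\mu}_j=\mu_j$) and, tacitly, that $(\hat{G}^-)'_j$ differs from $(G^-)'_j$ by the single edge $(a,p')$. Both assumptions fail in the sub-case $p'\in U((G^-)_1)$ that you yourself call the hard part: once $p'$'s unreachable component becomes reachable from $a$, applicants of that component move into the even class, their higher-rank edges are no longer deleted, the later reduced graphs acquire edges that $(G^-)'_j$ does not have, augmenting paths through $(a,p')$ can arise, and $a$ may become matched in every maximum matching of some $(\hat{G}^-)'_j$ --- at which point the parity argument, which needs $a$ to be a free endpoint of any alternating path through $(a,p')$, no longer applies. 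The ``careful analysis of how $p'$'s type can evolve,'' which you only promise, is precisely the substantive content of the lemma; as written, your argument proves the conclusion only in the easy case.

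Note also that this entire phase-by-phase bookkeeping can be avoided, which is what the paper does: by Lemma~\ref{crl} it suffices to exhibit one rank-maximal matching of $\hat{G}_c$, with $(a,p)$ given rank $c$, that omits $(a,p)$. If every rank-maximal matching of $\hat{G}_c$ contained $(a,p)$, take such a matching $M$; it avoids $(a,p')$, hence is a matching of $G_c$, and by Lemma~\ref{F} it is rank-maximal in $G_c$. Since the critical rank of $(a,p)$ in $G$ is $c$, some rank-maximal matching $M'$ of $G_c$ omits $(a,p)$; it has the same signature as $M$ and is also a matching of $\hat{G}_c$, hence rank-maximal there --- a contradiction (the same subgraph argument disposes of ranks $i>c$). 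If you want to keep your structural route you must complete the $U$-component case; otherwise the signature argument via Lemmas~\ref{F} and~\ref{crl} closes the proof in a few lines.
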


\begin{proof}

It suffices to show that $(a,p)$ is not matched in every rank-maximal matching of $\hat{G}_c$ with $(a,p)$ having rank $c$ and for every $c <i \leq r$ no rank-maximal matching of $\hat{G}$, in which $(a,p)$ has rank $i$  contains $(a,p)$. 

Suppose, $(a,p)$ is matched in every rank-maximal matching of $\hat{G}_c$. If $M$ is a matching of $\hat{G}_c$, by Lemma \ref{F}, $M$ is also a rank-maximal matching of $G$. Since the critical rank of $(a,p)$ is $c$ in $G$, there exists a rank-maximal matching $M'$ of $G_c$ that does not contain the edge $(a,p)$. The signature of $M$ and $M'$ is the same. Hence $M'$ is a rank-maximal matching of $\hat{G}_c$, which is a contradiction.

To prove the second part, suppose to the contrary that there exists rank-maximal matching $M$ of $\hat{G}$, in which $(a,p)$ has rank $i$ that  contains $(a,p)$.
Then $M$ is also a matching of $G$ and by Lemma \ref{F} it is also a rank-maximal matching of $G$. However, by Lemma \ref{crl} no rank-maximal matching of $G$,  in which $(a,p)$ has rank $i>c$  can contain $(a,p)$ - a contradiction. \qed

\end{proof}

\begin{restatable}{lemma}{lemmaeight}
Let $G=(A \cup P,\mathcal{E})$ be a bipartite graph, in which $a$ has two neighbors $p$ and $p'$ such that apart from $(a,p)$, each edge has a rank belonging to $\{1,2, \ldots, r\}$. Additionally,  $p'$ is a  rank one $f$-post in the preference list of $a$. Let $G'=(A \cup P, \mathcal{E} \setminus \{(a,p)\})$ and $G''= (A \cup P, \mathcal{E} \setminus \{(a,p')\})$. Suppose that $a$ becomes  unreachable in $G'_i$ and  the critical rank of $(a,p)$ is $c$ in $G''$. Then the critical rank of $(a,p)$ in the graph $G$ is equal to, correspondingly:
\begin{enumerate}
\item  $c$  if $c \leq i$,
\item  $i$  if $c >i$.
\end{enumerate}
\end{restatable}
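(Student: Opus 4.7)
The plan is to prove that the critical rank of $(a,p)$ in $G$ equals $\min(c,i)$ by matching upper and lower bounds. For the upper bound I will use two independent observations. First, since $a\in U(G'_i)$ by hypothesis, Definition~\ref{cr} gives directly that the critical rank in $G$ is at most $i$. Second, because $G''=G\setminus\{(a,p')\}$ has $p$ as the only post in $a$'s preference list, Lemma~\ref{decrease} applies with $G''$ in the role of its ``$G$'' and our $G$ in the role of its ``$\hat G$'', yielding that the critical rank is at most $c$. Together these give critical rank $\leq \min(c,i)$.

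For the lower bound I will invoke the matching-level characterization from Lemma~\ref{crl}: it suffices to show that for every $j<\min(c,i)$, both $a$ and $p$ lie in $E(G'_j)$. The claim for $a$ is immediate. Since $a$'s only edge in $G'$ is the rank-one edge $(a,p')$, a short case analysis shows $a$ can never lie in $O(G'_k)$, so the monotonicity of $O\cup U$ combined with the hypothesis that $a$ first becomes unreachable at phase $i$ forces $a\in E(G'_j)$ for every $j<i$. Proving $p\in E(G'_j)$ for every $j<c$ is the main work.

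The crux is the following transfer claim: for every $j<i$, the rank-maximal signatures of the subgraphs consisting of edges of rank at most $j$ of $G'$ and of $G_0 := G\setminus\{(a,p),(a,p')\}$ coincide. To prove it, I pick a rank-maximal matching $M^*$ of $G'$'s rank-at-most-$j$ subgraph in which $a$ is free; such $M^*$ exists because $a\in E(G'_j)$. Since $a$ is free, $M^*$ uses neither $(a,p)$ nor $(a,p')$, so $M^*$ is contained in $G_0$, and Lemma~\ref{F} promotes $M^*$ to a rank-maximal matching of $G_0$'s rank-at-most-$j$ subgraph with the identical signature, establishing the coincidence of signatures. The hypothesis that the critical rank in $G''$ equals $c$, together with the fact that $a$ is isolated (hence always even) in $G_0$, forces $p$ to be even in the reduced graph of $G_0$ at every phase below $c$, so some rank-maximal $M_p$ of $G_0$'s rank-at-most-$j$ subgraph leaves $p$ free; by the signature identity, $M_p$ is also rank-maximal in $G'$'s rank-at-most-$j$ subgraph, giving $p\in E(G'_j)$. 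The main obstacle is precisely this transfer step: a direct comparison of $(E,O,U)$ in $G'_j$ versus in the reduced graph of $G_0$ would have to handle the phenomenon that adding the rank-one edge $(a,p')$ can flip an entire $U$-component of $G_0$ into $E$; using $a\in E(G'_j)$ in combination with Lemma~\ref{F} sidesteps that delicate partition analysis and pins down the equality of signatures that drives the whole argument.
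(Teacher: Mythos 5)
Your proposal is correct, and it matches the paper on the upper bounds: both you and the paper get ``at most $c$'' from Lemma \ref{decrease} applied with $G''$ in the role of its $G$ and our $G$ as $\hat G$, and ``at most $i$'' from $a$ being unreachable at phase $i$ of $G'$. Where you genuinely differ is the lower bound. The paper argues by contradiction at the level of matchings: if the critical rank in $G$ were some $c'<\min(c,i)$, Lemma \ref{crl} produces a rank-maximal matching of $G$ (with $(a,p)$ at rank $c'$) avoiding $(a,p)$, and its Claim~\ref{C1} pushes this, via Lemma \ref{F} together with the asserted existence of a rank-maximal matching of $G'$ avoiding $(a,p')$, down to a rank-maximal matching of $G''$ avoiding $(a,p)$, contradicting the value $c$. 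You instead verify Definition \ref{cr} directly, showing both $a$ and $p$ remain in $E(G'_j)$ for every $j<\min(c,i)$: the evenness of $p$ is transferred from the reduced graphs of $G\setminus\{(a,p),(a,p')\}$ (which encode the critical rank in $G''$, since $a$ is isolated there) to those of $G'$ through the signature identity you establish with Lemma \ref{F} and the evenness of $a$. The two routes run on the same engine --- Lemma \ref{F} plus the fact that $a$ is even in $G'_j$ for all $j<i$ --- but yours is a direct argument on the $E/O/U$ types of the truncated graphs rather than a contradiction on full-graph matchings, and it has the merit of actually proving the auxiliary fact the paper only asserts (that $a$ cannot be odd, via the degree-one argument at $a$). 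Two presentational remarks: the reduction to ``both endpoints even'' follows already from Definition \ref{cr}, so citing Lemma \ref{crl} there is unnecessary, and your appeal to monotonicity of $O\cup U$ is superfluous once the hypothesis is read, as the paper intends (cf.\ the definition of $k$ in Algorithm \ref{HP}), as $i$ being the first phase with $a\in U(G'_i)$; neither point affects correctness.
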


\begin{proof}

We can prove that for every $j<i$ there exists a rank-maximal matching of $G'_j$ that contains $(a,p')$ and there exists a rank-maximal matching of $G'_j$ that does not contain
$(a,p')$.

\begin{claim} \label{C1}
Let us suppose that $(a,p)$ has rank $c'<i$ in $G$. Then, the existence of a rank-maximal matching $M$ of $G$ that does not contain $(a,p)$ implies that 
the critical rank of $(a,p)$ in $G''$ is at most $c'$.
\end{claim}

\begin{proof}
 By Lemma \ref{F}, $M$ is a rank-maximal matching of $G'$ and hence, every rank-maximal matching of $G'$
is also rank-maximal in $G$. We know that there exists a rank-maximal matching $M'$ of $G'$ that does not contain $(a,p')$. Thus $a$ is unmatched in $M'$. By Fact
\ref{F}, $M'$ is also a rank-maximal matching of $G''$. $M'$ does not contain $(a,p)$, which shows that the critical rank of $(a,p)$ is at most $c'$ in $G''$. \qed

\end{proof}

First we assume that $c \leq i$. Since $G''$ is a subgraph of $G$, by Lemma \ref{decrease}, the critical rank of $(a,p) \leq c$ in $G$. Suppose the critical rank of $(a,p)=c'<c$ in $G$.  Let us consider a graph $G$, in which $(a,p)$ has rank $c'$. Since the critical rank of $(a,p)$ is equal to $c'$, there exists a rank-maximal
matching $M$ of $G$  that does not contain $(a,p)$. By the above claim,  the critical rank of $(a,p)$ in $G''$ is at most $c' <c$ - a contradiction. We  conclude that the critical rank of $(a,p)$ remains $c$ in $G$ if $c \leq i$.

Let us consider  now the case when $c > i$. First we show that the critical rank of $(a,p) \leq i$ in $G$. Let us consider the graph $G'$. The vertex $a$ becomes unreachable in $G'$ after iteration $i$. We know that $G = G' \cup \{(a,p)\}$. Hence, the edge $(a,p)$ is deleted in the graph $G$ if the rank of $(a,p) > i$ in $G$. Therefore, the critical rank of $(a,p) \leq i$ in $G$. 

Next we show that the critical rank of $(a,p) = i$ in $G$. Suppose the critical rank of $(a,p)$ equals $i' < i$ in $G$. Since the critical rank of $(a,p)$ is equal to $i'<i$, there exists a rank-maximal
matching $M$ of $G$, in which $(a,p)$ has rank $i'$  that does not contain $(a,p)$. Again by the claim, the critical rank of $(a,p)$ in $G''$ is at most $i' <c$ - a contradiction. Therefore we have proved that the critical rank of $(a,p)$ is equal to $i$ in $G$. \qed

\end{proof}

\begin{corollary} \label{trunc}
Let $G$ be a bipartite graph such that $p'$ is a  rank one, $f$-post in the preference list of $a$. Suppose  that $a$ becomes  unreachable  after iteration $i$. Let  $\hat{G} = G \cup \{(a,p)\}$ with the rank of the edge $(a,p)$ being $c$. Then $(a,p)$ is never matched in a rank- maximal matching of $\hat{G}$, if $c > i$.
\end{corollary}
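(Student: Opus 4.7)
\medskip

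The plan is to view this corollary as a direct consequence of the preceding Lemma (about the behaviour of the critical rank when an $f$-post is already present) combined with the characterisation of the critical rank given in Lemma~\ref{crl}. The overall strategy is: show first that the critical rank of $(a,p)$ inside $\hat{G}$ is at most $i$, and then invoke part (2) of Lemma~\ref{crl} to conclude that an edge whose actual rank $c$ strictly exceeds its critical rank cannot appear in any rank-maximal matching.

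First I would set up the notation needed to feed $\hat{G}$ into the preceding lemma. Identify $\hat{G}$ with the ``$G$'' of Lemma~8, so that $G = \hat{G} \setminus \{(a,p)\}$ plays the role of the graph called ``$G'$'' there, and $\hat{G} \setminus \{(a,p')\}$ plays the role of ``$G''$''. The hypothesis of the corollary asserts that $a$ becomes unreachable in $G_i$, which matches the assumption of Lemma~8 word for word. Let $c^{*}$ denote the critical rank of $(a,p)$ in $\hat{G} \setminus \{(a,p')\}$. Then Lemma~8 tells us that the critical rank of $(a,p)$ in $\hat{G}$ equals $c^{*}$ if $c^{*} \le i$ and equals $i$ if $c^{*} > i$. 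In either case, the critical rank of $(a,p)$ in $\hat{G}$ is at most $i$.

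Since by assumption the rank of $(a,p)$ in $\hat{G}$ is $c > i$, it follows that $c$ strictly exceeds the critical rank of $(a,p)$ in $\hat{G}$. Part (2) of Lemma~\ref{crl} then implies that $(a,p)$ belongs to no rank-maximal matching of $\hat{G}_{c}$. To transfer this conclusion to $\hat{G}$ itself, I would use invariant~3 of the rank-maximal matching algorithm: if $M$ were a rank-maximal matching of $\hat{G}$ containing $(a,p)$, then the subset of edges of $M$ of rank at most $c$ would be a rank-maximal matching of $\hat{G}_{c}$ still containing $(a,p)$, contradicting the previous sentence. Hence $(a,p)$ is never matched in any rank-maximal matching of $\hat{G}$.

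The only place where any real care is needed is the bookkeeping in the first paragraph, since Lemma~8 is stated with $(a,p)$ already present in the ambient graph; making sure that the roles of $G$, $G'$ and $G''$ are reassigned consistently (so that the ``unreachable after iteration $i$'' hypothesis is verified in the correct subgraph) is the main potential pitfall. Everything after that is routine application of Lemma~\ref{crl} together with invariant~3.
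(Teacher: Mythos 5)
Your proposal is correct and follows essentially the intended route: the paper states Corollary~\ref{trunc} without a separate proof as an immediate consequence of the preceding lemma, whose conclusion (critical rank of $(a,p)$ in $\hat{G}$ is at most $i$) you combine, exactly as intended, with part (2) of Lemma~\ref{crl} and the algorithm's invariants to rule out $(a,p)$ in any rank-maximal matching when $c>i$. Your careful reassignment of the roles of $G$, $G'$, $G''$ and the truncation-to-$\hat{G}_c$ step via invariants 2--3 are both sound, so nothing is missing.
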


The next lemma is useful while building a falsified preference list of $a$ using the strategy `min max'. This lemma basically combines  two short preference lists of $a$ into a longer preference list.

\begin{restatable}{lemma}{lemmanine} \label{comb}
Let us consider two bipartite graphs $G_1$ and $G_2$ such that  $p$ is a rank one, $f$-post  in the preference list of $a$ in  both graphs. Also $a$ has only two neighbors in each of the graphs. In  $G_1$, $a$ has $p_1$ as a rank $i$ post. In $G_2$, $a$ has $p_2$ as a  rank $j$ post.   We assume that $G_3$ is the union of graphs $G_1$ and $G_2$. Then $a$ is matched to $p$ in every rank-maximal matching of both $G_1$ and $G_2$ if and only if $a$ is matched to $p$ in every rank-maximal matching of $G_3$.
\end{restatable}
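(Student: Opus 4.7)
The plan is to reduce both directions of the equivalence to Lemma~\ref{F} by observing that whenever $a$ is matched to $p$ in a matching $M$, the edges $(a,p_1)$ and $(a,p_2)$ cannot appear in $M$; consequently $M$ is simultaneously a matching of $G_1$, of $G_2$, and of $G_3$, and since every edge of $M$ keeps the same rank when one moves between these graphs, $M$ has the same signature in all three. Writing $G_0$ for the common base graph $G_1 \setminus \{(a,p_1)\} = G_2 \setminus \{(a,p_2)\}$, the only edges that distinguish the four graphs are $(a,p_1)$ and $(a,p_2)$, so the hypotheses of Lemma~\ref{F} (subgraph plus rank preservation) are automatically satisfied for the inclusions $G_1 \subseteq G_3$ and $G_2 \subseteq G_3$.

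For the forward direction, I would take an arbitrary rank-maximal matching $M_3$ of $G_3$ and case on the partner of $a$. If $a$ were matched to $p_1$ in $M_3$, then $(a,p_2)\notin M_3$, so $M_3$ is contained in the edge set of $G_1$, and Lemma~\ref{F} promotes $M_3$ to a rank-maximal matching of $G_1$ in which $a$ is not matched to $p$, contradicting the hypothesis on $G_1$. The subcase where $a$ is matched to $p_2$ is symmetric with $G_2$ replacing $G_1$. If $a$ were unmatched in $M_3$, then $M_3$ contains no edge incident to $a$ at all, so $M_3$ lies in $G_1$ and the same invocation of Lemma~\ref{F} produces a rank-maximal matching of $G_1$ in which $a$ is unmatched, again contradicting the hypothesis. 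Hence $a$ must be matched to $p$ in every rank-maximal matching of $G_3$.

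For the reverse direction, let $M_1$ be any rank-maximal matching of $G_1$ and pick any rank-maximal matching $M_3$ of $G_3$. By the hypothesis $(a,p)\in M_3$, so $M_3$ lies in the edge set of $G_1$, and Lemma~\ref{F} declares $M_3$ rank-maximal in $G_1$. Because $M_1$ and $M_3$ are both rank-maximal in $G_1$, they share a common signature in $G_1$, and by rank preservation this is also their signature when viewed inside $G_3$; hence $M_1$ is itself a rank-maximal matching of $G_3$, and the hypothesis on $G_3$ forces $(a,p)\in M_1$. The argument for $G_2$ is identical, so $a$ is matched to $p$ in every rank-maximal matching of $G_1$ and of $G_2$.

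The main delicate point is the bookkeeping around signatures: one must keep track of which ambient graph is used to compute a signature and verify that enlarging the edge set does not alter the signature of a matching that still lies in the smaller graph. This is exactly what Lemma~\ref{F} provides, so once it is cited the proof becomes a short case analysis on the partner of $a$, and no critical-rank, reduced-graph, or $f$-post machinery is needed for this particular statement.
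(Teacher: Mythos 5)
Your proposal is correct and follows essentially the same route as the paper: both directions are reduced to Lemma~\ref{F} by observing that any matching in which $a$ avoids $(a,p_1)$ (respectively $(a,p_2)$) lies inside the smaller graph with unchanged ranks, hence unchanged signature, and is therefore rank-maximal there. If anything, your case analysis is slightly more complete than the paper's, since you also treat explicitly the cases where $a$ is unmatched or matched to $p_2$, which the paper leaves to symmetry.
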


\begin{proof}
   Assume  that $a$ is matched to $p_1$ in a rank-maximal matching $M$ of $G_3$.   $M$ is also a matching in the graph $G_1$. Since $G_1$ is a subgraph of $G_3$, from the fact \ref{F}, $M$ is a rank-maximal matching of $G_1$. This means that $a$ is matched to $p_1$ in some rank-maximal matchings of $G_1$, which is a contradiction. 

Conversely, let $a$ be matched to $p$ in every rank-maximal matching of $G_3$. Let $M_3$ be a rank-maximal matching of $G_3$. Without loss of generality, suppose that $a$ is matched to $p_1$ in a rank-maximal matching $M_1$ of $G_1$.  Since $G_1$ is a subgraph of $G_3$, from fact \ref{F}, $M_3$ is a rank-maximal matching of $G_1$. Thus, $M_1$ and $M_3$ have the same signature. Therefore, $M_1$ is a rank-maximal matching of $G_3$, which is a contradiction.  
\qed
\end{proof}

\subsection{Algorithm for Strategy `min max'}\label{minmax}
In this section, we give an algorithm that computes a graph $H$ by using the strategy `min max' for the applicant $a_1$. We recall that strategy `min max' consists in finding a full preference list for $a_1$ such that the maximal rank
of a post he can obtain is minimized. Since we have assumed that $a_1$ is not always matched to his first choice  when he is truthful
and since strategy `best nonfirst' ensures that $a_1$ always gets the highest ranked non-$f$-post, it remains to check if it is possible
for $a_1$ to get one of the $f$-posts in every rank-maximal matching. For a given $f$-post $p$ we want to verify if $a_1$ can  construct  a full preference list   that  guarantees that $a$ becomes matched to $p$  in every rank-maximal matching of the resultant graph. From all such $f$-posts, we want to choose that of the highest rank in the true preference list of $a_1$. Below we show that this way we indeed compute the strategy `min max'.


Let $p$ be an $f$-post that $a_1$ wants to be matched to in every rank-maximal matching of $H_p$, where $H_p$ contains a full falsified preference list of $a_1$. How do we construct such $H_p$?  Let  $\hat{H}_p$ denote the graph, in which $a_1$ is incident only  to $p$ and $(a_1,p)$ has rank one. 
By Lemma \ref{comb}, we know that in order to obtain $H_p$, it suffices to find a certain number of graphs $H_{p,p_j}$ such that $p$ and $p_j$
are the only posts in the preference list of $a_1$, $p$ has rank $1$, $p_j$ has rank $j>1$ and every rank-maximal matching of $H_{p,p_j}$ matches $a_1$ to $p$. Then we can combine those graphs into one graph $H_p$. In fact, it suffices to fill the preference list of $a_1$ only till rank $k$, where $k$ is the rank, when $a_1$ becomes an unreachable vertex in $\hat{H'}_{p,i}$.  This follows from Corollary \ref{trunc}, which says that no rank-maximal matching of $H_{p,p'}$ such that $(a_1,p')$ has rank $i>k$ contains $(a_1, p')$.
Therefore, the ranks greater than $k$ in the preference list of $a_1$ may be filled with arbitrary posts not occurring previously.

Suppose that  we want to find a ``good" post for rank $i < k $ in the preference list of $a_1$. First, we check if there is any available post $p'$ such that the critical rank of $(a_1,p')$ is smaller than $i$ in ${H}_{p, p'}$. If we find such a post, then by Lemma \ref{crl}, the edge $(a_1,p')$
never occurs in a rank-maximal matching of $H_{p,p'}$ in which $(a_1,p')$ has rank $i$. Therefore, we may add $p'$ to the preference list of $a_1$ as a rank $i$ post.  Otherwise, we consider a post $p''$ with critical rank $i$ in the graph $H_{p, p''}$. We   verify if 
$p$ is matched to $a_1$ in every rank-maximal matching, when we add $(a_1, p'')$ as a rank $i$ edge to the graph $\hat{H}_p$. If yes, then we put $p''$ as an $i$th choice in $H_p$. If not, we check another  post with  critical rank $i$. If we are unable to find  any  post for rank $i$, Algorithm \ref{HP} outputs that there does not exist any preference list that matches $a_1$  to $p$ in every rank-maximal matching of $H_p$. 

The algorithm that computes a graph $H_p$, if it exists, is given below as Algorithm \ref{HP}. The thing that still requires explanation is how we verify if $(a,p)$ belongs to every rank-maximal matching of $H_{p,p'}$. For this, we need the reduced graph of $H_{p,p'}$ from phase $r$, which we can obtain by either applying the
standard rank-maximal matching algorithm \cite{IrvingKMMP06} or we can use one of the dynamic algorithms\cite{GhosalKP17}\cite{NimbhorkarV17} if we want to have a faster algorithm. Once we have access to this
reduced graph of $H_{p,p'}$ we can use  the following lemma.

\begin{restatable}{lemma}{lemmaten}\label{switch}
Let $G$ be an instance of the rank-maximal matching problem, in which the maximal rank of 
an edge is $r$. Also, we assume that $M$ is a fixed rank-maximal matching of $G$ that 
matches an edge $(a,p)$. Let us consider the switching graph of the matching $M$ in $G$. 
Then the edge $(a,p)$ belongs to every rank-maximal matching of $G$ if there does not exist 
any switching path or switching cycle in the switching graph of $M$ that contains the 
vertex $p$.
\end{restatable}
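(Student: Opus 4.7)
The plan is to invoke the switching-graph characterization of the set of all rank-maximal matchings from \cite{ghosal2014rank}, which states that, for a fixed rank-maximal matching $M$ of $G$, a matching $M'$ is rank-maximal in $G$ if and only if $M \triangle M'$ is a vertex-disjoint union of switching paths and switching cycles in the switching graph of $M$. Under this characterization, every rank-maximal matching is obtained from $M$ by ``applying'' (i.e.\ taking symmetric difference with) some collection of pairwise vertex-disjoint switching paths/cycles.

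First I would argue by contradiction: suppose $(a,p) \in M$, but there is a rank-maximal matching $M'$ of $G$ with $(a,p) \notin M'$. Then $(a,p) \in M \setminus M' \subseteq M \triangle M'$, so the edge $(a,p)$ appears in the vertex-disjoint collection $\mathcal{C}$ of switching paths and switching cycles whose union is $M \triangle M'$. Since $(a,p)$ is incident to $p$, the vertex $p$ lies on the path or cycle of $\mathcal{C}$ containing $(a,p)$. This directly contradicts the hypothesis that no switching path or switching cycle in the switching graph of $M$ contains $p$, proving that $(a,p) \in M'$ for every rank-maximal matching $M'$.

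The only thing that requires care is making sure that the characterization is being applied exactly as stated in \cite{ghosal2014rank}: the switching paths and cycles occurring in $M \triangle M'$ must literally be switching paths and switching cycles in the switching graph of $M$ (not merely alternating paths in $G$). This is precisely the content of the main theorem of the switching-graph paper, so the substantive work has already been done there and no further argument is needed here beyond quoting it. Since the rest of the proof is a one-line contradiction, there is no real obstacle; the main thing to verify is that the notions of ``switching path'' and ``switching cycle'' used in the statement of the present lemma match those from \cite{ghosal2014rank}, which is a bookkeeping rather than conceptual point.
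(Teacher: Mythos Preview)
Your proof is correct and essentially identical to the paper's own argument: both invoke Theorem~1 of \cite{ghosal2014rank} to say that every rank-maximal matching arises from $M$ by applying vertex-disjoint switching paths and cycles, and then observe that if no such path or cycle contains $p$, the partner of $p$ cannot change. Your contrapositive/contradiction phrasing is slightly more explicit about the symmetric difference, but the content is the same.
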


\begin{proof}
Let us fix a rank-maximal matching $M$ of $G$ that matches the edge $(a,p)$. Theorem $1$ from \cite{ghosal2014rank} states that every rank-maximal matching $G$ can be obtained from $M$ by applying some vertex-disjoint switching paths and switching cycles in the switching graph of $M$. If there does not exist any switching path or switching cycle containing the vertex $p$, $p$ has the same partner in every rank-maximal matching of $G$. Therefore, $(a,p)$ is matched in every rank-maximal matching of $G$. \qed

\end{proof}


\begin{definition}
We say that an $f$-post $p$ is {\em feasible} if there exists a graph $H_p$ such that every rank-maximal matching of $H_p$ matches $a_1$ to $p$.
\end{definition}

\begin{algorithm} 
\caption{Construction of $H_p$}
\label{HP} 
\begin{algorithmic}[1]
\State $C_i \leftarrow \{p' \in P:$ the critical rank of $(a_1, p')$ in $H_{p,p'}$ equals $i \}$
\State $L \leftarrow$ an empty list  - $L$ is the falsified preference list of $a_1$ that is going to have the form $(p_1, p_2,...,p_n)$, where $p_i$ denotes the rank $i$ post in $L$.
\State add $p$ to $L$ -- this is the rank $1$ post in the preference list $L$ of $a_1$
\State $k \leftarrow$ the number of phase  when $a_1$ becomes unreachable in $\hat{H}_p$, i.e., $a_1 \in U(\hat{H'}_{p,k})$ and $a_1 \in E(\hat{H'}_{p,i})$ for every $i<k$, where $\hat{H'}_{p,i}$ is the $i$-th reduced graph of $\hat{H}_p$.
\State $C \leftarrow  C_1$ 
\For{$i = 2, \ldots, k$}
	\If {$C \neq \emptyset$} \ (there exists a post $p'$ in $C$)
    	\State add $p'$ as a rank $i$ post to the falsified preference list $L$ of $a_1$
			 \State $C \leftarrow C \setminus \{p'\}$
    \Else \ \ ($C = \emptyset$)
		\State $SEARCH \leftarrow TRUE$
		\While {$\exists p'$ with critical rank of $(a,p')$  equal to $i$ in $\hat{H}_p$ \ and \   $SEARCH$}    
        
        \If {$(a,p)$ belongs to every rank-maximal matching of $H_{p,p'}$ \ (Lemma \ref{switch})\ } 
        	\State add $p'$ as a rank $i$ post to the falsified preference list $L$ of $a_1$
					\State $SEARCH \leftarrow FALSE$
         \EndIf   
     \EndWhile
    \EndIf
		\If {SEARCH}  Break
		\EndIf
    \State $C \leftarrow C \cup C_i$
\EndFor
\If {$L$ is a full preference list} \\
\Return{$H_p$}
\Else \\
\Return{p is not a feasible $f$-post}
\EndIf
\end{algorithmic}
\end{algorithm}

In the lemma below we prove the correctness of Algorithm \ref{HP}.

\begin{restatable}{lemma}{lemmaeleven}
If Algorithm \ref{HP} outputs a graph $H_p$, then every rank-maximal matching of $H_p$ matches $a_1$ to $p$.
Otherwise, there does not exist a graph $H_p$, in which $a_1$ gives a full preference list  such that every rank-maximal matching of $H_p$ matches $a_1$ to $p$.
\end{restatable}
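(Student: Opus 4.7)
The plan is to prove the two directions separately, using Lemma \ref{comb} to reduce the global statement to pairwise statements about the two-edge subgraphs $H_{p, p_j}$ containing $(a_1, p)$ at rank $1$ and $(a_1, p_j)$ at rank $j$.

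For the soundness direction, suppose the algorithm outputs $H_p$ with $a_1$'s preference list $L = (p, p_2, \ldots, p_n)$. Iterated application of Lemma \ref{comb} reduces the claim to verifying, for each $j \geq 2$, that $a_1$ is matched to $p$ in every rank-maximal matching of $H_{p, p_j}$. Three cases arise. When $j \leq k$ and $p_j$ is drawn from the bank $C$, the critical rank of $(a_1, p_j)$ in $H_{p, p_j}$ is strictly less than $j$, so Lemma \ref{crl}(2) gives $(a_1, p_j) \notin M$ for every rank-maximal matching $M$, forcing $a_1$ to be matched to $p$. When $j \leq k$ and $p_j$ is placed by the inner while-loop (with critical rank equal to $j$), the Lemma \ref{switch}-based check explicitly certifies that $(a_1, p)$ lies in every rank-maximal matching of $H_{p, p_j}$. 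When $j > k$, Corollary \ref{trunc} applies: since $a_1$ becomes unreachable after phase $k$ in $\hat{H}_p$, the edge $(a_1, p_j)$ at rank $j > k$ lies in no rank-maximal matching of $H_{p, p_j}$.

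For the completeness direction, suppose the algorithm fails at some iteration $i$; that is, the bank $C$ is empty at the start of iteration $i$ and no post of critical rank $i$ passes the Lemma \ref{switch} check. I would argue for contradiction: suppose some full preference list $L^*$ exists for which $a_1$ is matched to $p$ in every rank-maximal matching. A preliminary step reduces to the case that $p$ sits at rank $1$ in $L^*$. If instead some $q \neq p$ were at rank $1$, then $q$ is either a non-$f$-post, in which case Lemma \ref{basic} forces $a_1$ to be matched to $q$ (contradicting the assumption); or an $f$-post, in which case Corollary \ref{corollary1} gives critical rank $1$ for $(a_1, q)$, and Lemma \ref{crl}(3) then yields a rank-maximal matching of the phase-$1$ subgraph containing $(a_1, q)$ that persists as an RMM of the full graph (since swapping $a_1$'s partner from $q$ to $p$ would worsen the signature in the lexicographic order), placing $a_1$ on $q$ rather than $p$, again a contradiction. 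Given $p$ at rank $1$, an exchange argument aligns $L^*$ with the algorithm iteration by iteration: at the earliest rank $j$ of disagreement, the algorithm's choice $r$ is a type-A post (critical rank $<j$) and $r$ appears at some rank $j' > j$ in $L^*$ (since $L^*$ is a full list); swapping the placements of $r$ and $L^*$'s rank-$j$ post preserves the global property because type-A eligibility is monotone in rank and the Lemma \ref{switch} check is intrinsic to the host graph (by Lemma \ref{comb}), independent of other posts in $a_1$'s list. Iterating brings $L^*$ into agreement with the algorithm through rank $i-1$. At rank $i$, the aligned $L^*$ must place a post that is either type-A (all exhausted) or type-B passing Lemma \ref{switch} (none exists), a contradiction.

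The main obstacle is the exchange argument, together with the preliminary reduction to $p$ at rank $1$. Each swap of posts between ranks in $L^*$ must preserve the property that every rank-maximal matching assigns $a_1$ to $p$, which reduces via Lemma \ref{comb} to a pairwise check; the monotonicity of type-A eligibility (a post with critical rank $c$ is valid at every rank $j > c$) and the intrinsic nature of the Lemma \ref{switch} check make this routine, but careful bookkeeping of which posts are available at which ranks is required. The preliminary reduction is more delicate because Lemma \ref{comb} presupposes $p$ at rank one, so one must analyze the phase-one reduced graph directly and invoke the $f$-post classification (Lemma \ref{basic} and Corollary \ref{corollary1}) to derive the contradiction.
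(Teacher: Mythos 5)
Your soundness half is essentially the paper's own argument: Corollary \ref{trunc} plus Lemma \ref{comb}, with the pairwise case analysis (bank posts via Lemma \ref{crl}(2), while-loop posts via the Lemma \ref{switch} certificate, ranks beyond $k$ via Corollary \ref{trunc}) spelled out where the paper merely cites the two results. For completeness the paper is much terser: it asserts that the algorithm tries every available post of critical rank at most $i$, that posts of critical rank greater than $i$ cannot occupy rank $i$ (Lemma \ref{crl}(1) would force the edge into every rank-maximal matching), and hence that failure at rank $i$ is final. Your monotonicity-plus-exchange argument is an attempt to supply the justification the paper leaves implicit, namely that no alternative assignment of posts to the earlier ranks can rescue rank $i$; that part of your plan is sound, since by Lemma \ref{comb} usability of a post at a given rank is a property of the two-edge graph $H_{p,p'}$ alone, and a post usable at rank $j$ is usable at every rank greater than $j$.

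The genuine gap is in your preliminary reduction to ``$p$ at rank one''. Lemma \ref{crl}(3) asserts the existence of a rank-maximal matching of $H_c$ that does \emph{not} contain $(a_1,q)$ — the opposite of what you invoke — and the conclusion you draw from it (that an $f$-post $q\neq p$ at rank one yields a rank-maximal matching assigning $a_1$ to $q$, hence a contradiction) is false in general: if, say, $q$ is the sole post in the list of some other applicant $b$, then every maximum matching of the rank-one graph gives $q$ to $b$, and every rank-maximal matching of the whole graph can still send $a_1$ to $p$, so a successful full list may perfectly well carry $q$ tied with $p$ at rank one. The reduction you actually need concerns $p$ itself: since $p$ is an $f$-post, Corollary \ref{corollary1} gives critical rank $1$ for $(a_1,p)$, so if $p$ occupied any rank greater than one, Lemma \ref{crl}(2) says $(a_1,p)$ lies in no rank-maximal matching — hence $p$ must be at rank one; any further posts tied with it (or ties at later ranks) cannot be ruled out and must instead be relocated to later ranks, a tie-breaking step the paper itself performs only later, in the proof of the `min max' theorem via $H_{mod}$, and which your exchange machinery would have to absorb explicitly. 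A smaller slip: at the first rank of disagreement the algorithm's choice need not be a type-A post — it may be a post placed by the while-loop with critical rank exactly $j$ — but your own observation that the Lemma \ref{switch} check is intrinsic to $H_{p,p'}$ and that usability is monotone in rank covers that case as well, so this one is cosmetic.
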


\begin{proof}
If Algorithm \ref{HP} outputs a graph $H_p$, then by Corollary \ref{trunc} and  Lemma \ref{comb} every rank-maximal matching of $H_p$ matches $a_1$ to $p$. If the algorithm does not output any graph, then it means that there was a problem for some $i$ with finding a rank $i$ post for the preference  list of $a_1$. However, the algorithm considers every available post $p'$ of critical rank at most $i$ in $H_{p,p'}$ for that position. Therefore, if none of them has the required  property that every rank-maximal matching of $H_{p,p'}$ matches $a_1$ to $p$, then
no post of critical rank greater than $i$ in $H_{p,p'}$ satisfies it either.  This finishes the proof of correctness of Algorithm \ref{HP}. \qed

\end{proof}

\begin{restatable}{theorem}{theoremtwo}
 Let $p$ be the highest ranked feasible $f$-post  in the true preference list of $a_1$. Then $H_p$ output by Algorithm \ref{HP} is a strategy `min max'. Moreover, each
graph $H$ that is a strategy `min max' has the property that each rank-maximal matching of $H$ matches $a_1$ to $p$.

\end{restatable}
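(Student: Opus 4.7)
The plan is to prove two statements: (i) $H_p$ attains the minimum possible worst-case rank across all strategies for $a_1$, namely $r_p$, the true rank of $p$; and (ii) every min-max strategy must match $a_1$ to $p$ in every rank-maximal matching.

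For the upper bound in (i), I would simply invoke the preceding lemma, which asserts that every rank-maximal matching of $H_p$ matches $a_1$ to $p$; hence the worst post obtainable under $H_p$ has true rank exactly $r_p$. For the lower bound I would argue by contradiction: assume $H'$ is a strategy whose worst-case rank is strictly smaller than $r_p$. Then in every rank-maximal matching of $H'$, $a_1$ is matched to a post of true rank less than $r_p$. We may assume $r_p$ is at most the rank of the highest-ranked non-$f$-post, since otherwise strategy `best nonfirst' would already beat $H_p$, contradicting the very fact that $H_p$ is min-max; then by Lemma \ref{flemma2} every post of true rank less than $r_p$ in $a_1$'s true list is an $f$-post. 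Fix any rank-maximal matching $M'$ of $H'$ and let $p'$ denote the $f$-post matched to $a_1$ in $M'$.

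The hard part will be to show that $p'$ is feasible, which contradicts the maximality of $p$ among feasible $f$-posts and closes (i). My plan is to run Algorithm \ref{HP} with target $p'$ and show that it succeeds, using the preference list of $a_1$ in $H'$ as a guide. Concretely, for each rank $i$ up to the phase where $a_1$ becomes unreachable in $\hat H_{p'}$, the post listed at rank $i$ in $H'$ is a valid choice in the corresponding iteration of Algorithm \ref{HP}: either its critical rank in the current running graph is at most $i$ (and it is handled by the $C_i$-branch), or it equals $i$ and adding the edge preserves the invariant ``every rank-maximal matching matches $a_1$ to $p'$''. The latter is verifiable by the switching-graph criterion of Lemma \ref{switch}, while the correct interaction of the per-rank additions with the global matching structure is handled by Lemmas \ref{F} and \ref{comb}. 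An induction on $i$ then yields a full preference list witnessing that $p'$ is feasible.

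For (ii), let $H$ be any min-max strategy and assume some rank-maximal matching of $H$ matches $a_1$ to a post $p''$. Since the worst-case rank of $H$ equals $r_p$ by (i), the true rank of $p''$ is at most $r_p$. If it is strictly less than $r_p$, applying the feasibility construction of the previous paragraph to $p''$ (with $H$ playing the role of $H'$) produces a feasible $f$-post of true rank less than $r_p$, again contradicting the maximality of $p$. If it equals $r_p$, the same construction shows that $p''$ is a feasible $f$-post of the same rank as $p$, and I would conclude $p'' = p$ from the specific choice made by Algorithm \ref{HP} among feasible $f$-posts of the minimum rank. Thus every rank-maximal matching of $H$ matches $a_1$ to $p$, completing the proof.
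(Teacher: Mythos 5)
There is a genuine gap in the core of your lower-bound argument, namely the step you yourself label ``the hard part'': proving that the better post $p'$ is feasible. Your plan is to rerun Algorithm~\ref{HP} with target $p'$ and argue, rank by rank, that the post listed at rank $i$ in $H'$ ``either has critical rank at most $i$, or equals $i$ and adding the edge preserves the invariant that every rank-maximal matching matches $a_1$ to $p'$.'' That last clause is precisely what has to be proved, and nothing in your sketch supplies a reason for it: $H'$ only guarantees that $a_1$ is always matched to \emph{some} post of true rank less than $r_p$, not to $p'$ in particular (different rank-maximal matchings of $H'$ may match $a_1$ to different posts, and $H'$ may contain ties), so the invariant for the partial graphs $H_{p',p_i}$ cannot simply be inherited from $H'$. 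The paper closes exactly this gap with two ingredients you are missing. First, it shows that under any strategy at least as good as $H_p$, every post $a_1$ can be matched to is an $f$-post, hence by Corollary~\ref{corollary1} and Lemma~\ref{crl} has critical rank $1$, and therefore can only be matched if it appears at rank $1$ in the falsified list; in particular $p'$ is a rank-one post of $H'$. Second, it \emph{de-ties} the list: it builds a tie-free graph $H_{mod}$ with $p'$ as the unique rank-one post, proves by a signature-comparison argument (using Lemma~\ref{F}) that rank-maximal matchings of $H_{mod}$ and $H'$ have the same signature, and then concludes that every rank-maximal matching of $H_{mod}$ matches $a_1$ to $p'$ (any other matched post would have to be an $f$-post matched at rank $\ge 2$, contradicting critical rank $1$). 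Only after this does Lemma~\ref{comb} enter, to say that Algorithm~\ref{HP} could reproduce such a list, i.e.\ $p'$ is feasible. Without these steps your induction has no base and no inductive step, and the switching-graph test of Lemma~\ref{switch} is a verification tool, not a proof that a suitable post exists at each rank.

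Two smaller points. Your reduction ``we may assume $r_p$ is at most the rank of the highest-ranked non-$f$-post, since otherwise best nonfirst would beat $H_p$, contradicting that $H_p$ is min max'' is circular: that $H_p$ is min max is the statement under proof, so you cannot invoke it. And in part (ii), concluding $p''=p$ ``from the specific choice made by Algorithm~\ref{HP}'' does not work: the algorithm's internal choices say nothing about an arbitrary min-max strategy $H$; what is needed is again the $H_{mod}$-type construction applied to $H$, showing that any post $a_1$ can receive under $H$ is a feasible $f$-post of true rank at most $r_p$, and then the maximality of $p$ among feasible $f$-posts.
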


\begin{proof}
Let $H_{opt}$ denote a graph that is a strategy `min max'.   
Suppose that $H_p$ is not a strategy `min max'. There exists then a post $p'$ such that $rank(a_1,p') < rank(a_1,p)$ in the true preference list of $a_1$ and $(a_1,p')$ belongs to some rank-maximal matching of $H_{opt}$.
Since $rank(a_1,p') < rank(a_1,p)$ in the true preference list of $a_1$, $p'$ is an $f$-post. 
Also, $a_1$ can only be matched to a post of rank   not worse than $rank(a_1,p)$ in the true preference list of $a_1$ in a rank-maximal matching of $H_{opt}$, otherwise $H_{opt}$ would not be a strategy `min max' because it would fare worse than the strategy $H_p$ in terms of the worst post $a_1$ can become matched to. 
This means that $a_1$ can only be matched to $f$-posts under strategy $H_{opt}$, because every non-$f$-post has a worse rank than the rank of post $p$ in the preference list of $a_1$. Corollary \ref{corollary1} shows that the critical rank of an $f$-post is $1$. Hence, $a_1$ can only be matched to a post that has rank $1$ in $H_{opt}$.
Therefore, we can only put $p'$ as a rank $1$ post in the preference list of $a_1$ in $H_{opt}$. 

Now we will show how to build a  preference list of $a_1$ without any ties that matches $a_1$ to $p'$ in every rank-maximal matching. Let us denote this graph as $H_{mod}$. We put $p'$ as a rank $1$ post.  We create the preference list of $a_1$ in $H_{mod}$ using the preference list of $a_1$ in $H_{opt}$. Suppose we want to add  a rank $i$ post to the preference list of $a_1$ in $H_{mod}$. If there is only one rank $i$ post in the preference list of $a_1$ in $H_{opt}$, then that post will have the same rank in the preference list of $a_1$ in $H_{mod}$. If there is more than one rank $i$ posts in the preference list of $a_1$ in $H_{opt}$, we choose only one of them. Finally, we add the remaining  posts to $a_1$'s preference list in $H_{mod}$ in an arbitrary order as his least preferred posts. This way we get a preference list
without any ties.

First, we show that $H_{mod}$ and $H_{opt}$ have the same signature. A rank-maximal matching in $H_{opt}$ that matches $a$ to $p'$ is a matching in $H_{mod}$ and has the same signature. Thus a rank-maximal matching of $H_{opt}$ does not have a better signature than a rank- maximal matching of $H_{mod}$. Also, any post in the preference list of $a_1$ has  rank in $H_{opt}$ that is not worse than in $H_{mod}$. If we consider a rank-maximal matching of $H_{mod}$, then the signature of that matching is not worse in $H_{opt}$. Therefore rank-maximal matchings of $H_{opt}$ and $H_{mod}$ have   the same signature. 

 Next we want to prove that $a_1$ is matched to $p'$ in every rank-maximal matching of $H_{mod}$. Suppose that $a_1$ is matched to some $p''$ in a rank-maximal matching $M$ of $H_{mod}$.  We can see that $M$ is a matching in $H_{opt}$. The rank of $p''$ in the preference list of $a_1$ in $H_{opt}$ is not worse than the rank of $p''$ in the preference list of $a_1$ in $H_{mod}$. This shows that the signature of the matching $M$ in $H_{opt}$ is not worse than the signature of $M$ in $H_{mod}$. But we know that the signatures of rank-maximal matching of  $H_{opt}$ and $H_{mod}$ are the same. Hence, $M$ is a rank-maximal matching of $H_{opt}$. This implies that $a_1$ is matched to $p''$ in a rank-maximal matching of $H_{opt}$ too. 

The preference list of $a_1$ in  $H_{mod}$ does not contain any ties, which means that $p''$ is not a rank $1$ post in the preference list of $a_1$ in $H_{mod}$. We know that the critical rank of an $f$-post is $1$. Therefore $p''$ is not an $f$-post. But $a_1$ cannot be matched to a non-$f$-post in $H_{opt}$ - a contradiction. This proves that $a_1$ is matched to $p'$ in every rank-maximal matching of $H_{mod}$.  Lemma \ref{comb} shows that  Algorithm \ref{HP} could also create the same preference list because it would consider each graph $H_{p,p_i}$ such that $p_i$ has rank $i$ in $H_{mod}$.  Since  the rank of $(a_1,p')$ is strictly better than the rank of $(a_1,p)$ in the true preference list of $a_1$, it means that $p$ is not a highest rank feasible $f$-post - a contradiction. Therefore $H_p$ is indeed  a strategy `min max'. Also, by the construction of $H_{mod}$ we have shown that each
graph $H$ that is a strategy 'min max' has the property that each rank-maximal matching of $H$ matches $a_1$ to the same post. \qed

 \end{proof}

 
 \section{`Improve Best' Strategy} \label{improvebest}
In the previous section, we present an optimal algorithm for the manipulation strategy `min max'. Also, we have shown that `min max' strategy matches $a_1$ to the same post in every rank-maximal matching. There is a possibility that $a_1$ may not be matched to his most preferred post in any rank-maximal matching by using `min max' strategy. Therefore, a natural question is whether there is a manipulation strategy that matches $a_1$ to his most preferred post in at least one rank-maximal matching. Here we present a strategy that matches $a_1$ to his most preferred post in some rank-maximal matchings.

\begin{restatable}{lemma}{lemmatwelve} \label{first}
Let $G$ be a bipartite graph and $a$, $a'$ be two applicants with identical preference list. If $(a,p)$ is matched in a rank-maximal matching of $G$ then the edge $(a',p)$ is present in the reduced graph of $G$.
\end{restatable}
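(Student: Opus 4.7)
The plan is to exploit the symmetry between $a$ and $a'$ created by their identical preference lists: I will show that from any rank-maximal matching of $G$ containing $(a,p)$ one can obtain another rank-maximal matching of $G$ containing $(a',p)$, and then appeal to the invariants of Algorithm~\ref{alg1} to conclude that $(a',p)$ must appear in the reduced graph.

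More concretely, let $M$ be a rank-maximal matching of $G$ with $(a,p)\in M$. I would split into two cases according to whether $a'$ is matched in $M$. If $a'$ is unmatched in $M$, set
\[
M' \;=\; \bigl(M \setminus \{(a,p)\}\bigr) \cup \{(a',p)\}.
\]
If $a'$ is matched in $M$ to some post $p'$, then since $a$ and $a'$ have the same preference list they also have the same neighborhood, so $(a,p')\in \mathcal{E}$ and the edges $(a,p'),(a',p')$ have the same rank, and likewise for $(a,p),(a',p)$; put
\[
M' \;=\; \bigl(M \setminus \{(a,p),(a',p')\}\bigr) \cup \{(a',p),(a,p')\}.
\]
In both cases $M'$ is a valid matching (no vertex is covered twice), and the multiset of ranks of its edges is identical to that of $M$, so $\mathrm{sig}(M')=\mathrm{sig}(M)$. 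Hence $M'$ is rank-maximal in $G$ and contains $(a',p)$.

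Finally, I invoke invariant~(1) of Algorithm~\ref{alg1} stated immediately after its pseudocode, which guarantees that every rank-maximal matching of $G_i$ is contained in $G'_i$; applied to $i=r$ this yields that every edge of a rank-maximal matching of $G$ is present in the reduced graph $G'$. Since $(a',p) \in M'$, we conclude $(a',p)\in G'$, which is exactly the claim.

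The only delicate point is the case where $a'$ is already matched: one must verify that the swapped edges indeed exist in $G$ and carry the expected ranks, but this is immediate from the hypothesis that the preference lists of $a$ and $a'$ coincide (same neighbors, same ranks). After this check the signature-preservation argument and the invariant do all the remaining work.
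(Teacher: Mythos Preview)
Your proof is correct and follows essentially the same approach as the paper: swap the partners of $a$ and $a'$ to obtain another matching with the same signature, hence rank-maximal, and conclude that $(a',p)$ lies in the reduced graph. Your write-up is in fact more careful than the paper's, as you explicitly handle the case where $a'$ is unmatched and explicitly name invariant~(1) to justify the final step.
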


 \begin{proof}
 Let ($a,p)$ is matched in $G$. If we swap the partners of $a$ and $a'$, we get another matching in $G$. Since the preference list of $a$ and $a'$ are identical,  the signature of both matchings is the same. Hence, there exists a rank-maximal matching that matches the edge $(a', p)$. Thus $(a', p)$ is present in the reduced graph of $G$. \qed
\end{proof}

\subsection{Strategy}
Here we give a brief description of the strategy `improve best' for  $a_1$. We assume that $p_1$ is the most preferred post in the preference list of $a_1$. Given a bipartite graph containing the true preference list of every applicant and a rank-maximal matching, we apply the decremental dynamic rank-maximal matching algorithm\cite{GhosalKP17} to delete $a_1$ from the graph. Let us denote the updated graph as $G$. First, we check which applicant is matched to $p_1$ in $G$. Suppose $p_1$ is matched to $a'_1$ in $G$. $a_1$ copies the preference list of $a'_1$ and present it as his falsified preference list. Let $H$ be the graph with the falsified preference list of $a_1$. We claim that $(a_1, p_1 )$ is a rank-maximal pair in $H$ when he uses this strategy. In other words, $a_1$ is matched to $p_1$ in some rank-maximal matchings of $H$.

\subsection{Correctness}
We assume that there is a rank-maximal matching of $G$ that matches the edge $(a'_1, p_1)$. We apply the incremental rank-maximal matching algorithm to get a rank-maximal matching of $H$. If we consider the vertex $a'_1$ in $H$, there are two possibilities. In the first case, let $a'_1$ be matched to  $p_1$ in $H$. Without loss of generality, we assume that $a_1$ is matched to $p'_1$ in $H$. Now consider another matching of $H$ by swapping the partners of $a_1$ and $a'_1$. Since $a_1$ and $a'_1$ has the identical preference list, both matchings have the same signature. Hence $(a_1, p_1)$ is a rank-maximal pair of $H$.
\begin{figure}[!ht]
  \centering
  
    \includegraphics[width=\textwidth]{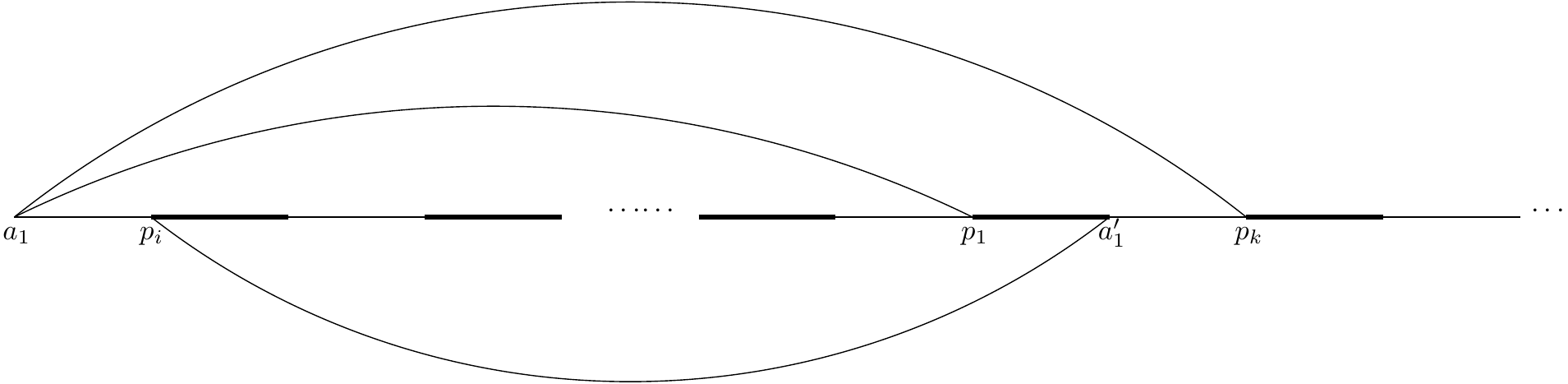}
  \caption{$a_1$ is the manipulator who is copying the preference list of $a'_1$ and the thick edges belong to the matching of $G$}

\label{bestrank}
\end{figure}
In the other case, we assume that $(a'_1, p_1)$ is not a matched edge in $H$. By Theorem 10 of \cite{GhosalKP17}, we get a rank-maximal matching of $H$ from $G$ by applying an alternating path starting from $a_1$ (Firgure \ref{bestrank}). Since $(a'_1, p_1)$
is not matched in $H$, the alternating path contains the edge $(a'_1, p_1)$. Suppose, a rank $i$ post $p_i$ (resp. rank $k$ post $p_k$) is matched to $a_1$ (resp. $a'_1$) in $H$. By Lemma \ref{first}, the edges $(a_1, p_k)$ and $(a'_1, p_i)$ are also present in the reduced graph of $H$. Therefore, the path segment $a_1 \rightarrow p_i \rightarrow \cdots \rightarrow p_1 \rightarrow a'_1 \rightarrow p_k$ together with the edge $(a_1, p_k)$ creates an alternating cycle in the reduced graph of $H$. Any alternating cycle in a reduced graph is a switching cycle \cite{ghosal2014rank}. If we switch along the alternating cycle in $H$, $(a'_1, p_1)$ becomes a  matched edge in $H$. Now we have arrived at the first case. Therefore we have proved that $(a_1, p_1)$ is a rank-maximal pair in $H$.

\bibliography{rmm_cheating}
\bibliographystyle{splncs03}

\end{document}